\definecolor{ black}{RGB}{10, 10, 200}
\newcommand{\bheading}[1]{{\vspace{4pt}\noindent{\textbf{#1}}}}
\newcommand{\iheading}[1]{{\vspace{2pt}\noindent{\textit{#1}}}}
\newcolumntype{?}{!{\vrule width 1pt}}
\newcounter{note}[section]
\newcommand{\secref}[1]{\mbox{Sec.~\ref{#1}}\xspace}
\newcommand{\figref}[1]{\mbox{Fig.~\ref{#1}}}
\newcommand{\ignore}[1]{}
\newcommand{\ie}{\textit{i.e.}\xspace}
\newcommand{\eg}{\textit{e.g.}\xspace}
\newcommand{\sysname}{Orthrus\xspace}
\newcounter{packednmbr}
\newenvironment{packeditemize}{
\begin{list}{$\bullet$}{
\setlength{\labelwidth}{0pt}
\setlength{\itemsep}{2pt}
\setlength{\leftmargin}{\labelwidth}
\addtolength{\leftmargin}{\labelsep}
\setlength{\parindent}{0pt}
\setlength{\listparindent}{\parindent}
\setlength{\parsep}{1pt}
\setlength{\topsep}{1pt}}}{\end{list}}
\newtheorem{theorem}{Theorem}
\newtheorem{lemma}{Lemma}
\def\BibTeX{{\rm B\kern-.05em{\sc i\kern-.025em b}\kern-.08em
    T\kern-.1667em\lower.7ex\hbox{E}\kern-.125emX}}
    \newcommand{\Sadoghi}[1]{{\color{cyan}{\textbf{Sadoghi:} #1}}}
    \newcommand{\Chen}[1]{{\color{ black}[{\textbf{Chen:} #1}]}}
    \newcommand{\Sadoghi}[1]{}
    \newcommand{\Chen}[1]{}
\begin{document}
\title{\sysname: Accelerating Multi-BFT Consensus through Concurrent Partial Ordering of Transactions (Extended Version)
}

\author{
	\IEEEauthorblockN{
		Hanzheng Lyu\IEEEauthorrefmark{1}, 
		Shaokang Xie\IEEEauthorrefmark{2}, 
		Jianyu Niu\IEEEauthorrefmark{3}, 
		Ivan Beschastnikh\IEEEauthorrefmark{4}, 
        Yinqian Zhang\IEEEauthorrefmark{3}, 
        Mohammad Sadoghi\IEEEauthorrefmark{2}, 
        Chen Feng\IEEEauthorrefmark{1} 
} 
\IEEEauthorblockA{University of British Columbia (\IEEEauthorrefmark{1}Okanagan Campus, \IEEEauthorrefmark{4}Vancouver Campus) \\\IEEEauthorrefmark{2}University of California, Davis, \IEEEauthorrefmark{3}Southern University of Science and Technology}

\IEEEauthorblockA{\IEEEauthorrefmark{1}\{hzlyu@student.ubc.ca, chen.feng@ubc.ca\}, \IEEEauthorrefmark{2}\{skxie,msadoghi\}@ucdavis.edu} \IEEEauthorrefmark{3}niujy@sustech.edu.cn, \IEEEauthorrefmark{4}bestchai@cs.ubc.ca
    }

\maketitle

\begin{abstract}
Multi-Byzantine Fault Tolerant (Multi-BFT) consensus allows multiple consensus instances to run in parallel, resolving the leader bottleneck problem inherent in classic BFT consensus. However, the global ordering of Multi-BFT consensus enforces a strict serialized sequence of transactions, imposing additional confirmation latency and also limiting concurrency. 
In this paper, we introduce \sysname, a Multi-BFT protocol that accelerates transaction confirmation through partial ordering while reserving global ordering for transactions requiring stricter sequencing. 
To this end, \sysname strategically partitions transactions to maximize concurrency and ensure consistency. Additionally, it incorporates an escrow mechanism to manage interactions between partially and globally ordered transactions. 
We evaluated \sysname through extensive experiments in realistic settings, deploying 128 replicas in WAN and LAN environments. 
Our findings demonstrate latency reductions of up to 87\% in WAN compared to existing Multi-BFT protocols.
\end{abstract}

\begin{IEEEkeywords}
Byzantine fault tolerance, Multi-BFT consensus, Blockchain, Leader bottleneck, Partial ordering.
\end{IEEEkeywords}
\color{black}
\section{Introduction} \label{sec:intro} 
Byzantine Fault Tolerant (BFT) consensus has received renewed interest due to its adoption in blockchain applications~\cite{amiri2024bedrock,amiri2019caper,gupta2020resilientdb,gupta2019proof,ruan2021blockchains,dolev2023sodsbc}
ranging from cryptocurrency~\cite{algorand, hanke2018dfinity, QuestMarko, lens} to Decentralized Finance (DeFi)~\cite{werner2022sok}.
BFT consensus enables a network of replicas to agree on the same sequence of transactions, effectively mitigating double-spending attacks~\cite{karame2012double}\textemdash where a user might attempt to use the same asset in multiple transactions—even in the presence of Byzantine replicas that can behave maliciously.
Most BFT consensus protocols (\eg, PBFT~\cite{pbft1999}) employ a leader-based scheme, where a designated leader coordinates with other replicas (referred to as backups) to reach an agreement on its proposals (\eg, transactions). However, this leader-based approach faces a significant performance bottleneck~\cite{gai2021dissecting, avarikioti2023fnf, stathakopoulou2022state, gupta2021rcc}: the leader's workload scales linearly with the number of replicas, which in turn impacts system throughput and latency. 

To mitigate this bottleneck, Multi-BFT consensus protocols enable multiple leader-based consensus instances to run in parallel~\cite{MIR-BFT, avarikioti2023fnf, gupta2021rcc, stathakopoulou2022state, Ladon2025}. 
In these protocols, each replica simultaneously acts as the leader for one instance and a backup for others.  
Replicas run each instance to agree on a sequence of blocks, which are then ordered into a global sequence, as shown in~\figref{fig:motivation}a. 
By the global ordering, Multi-BFT consensus appears as a single BFT instance, but distributes workload among replicas to better utilize bandwidth and increase throughput.
For example, ISS~\cite{stathakopoulou2022state}, a state-of-the-art Multi-BFT protocol, demonstrates significant performance improvements over single-leader protocols, achieving up to 37$\times$ and 56$\times$ better throughput for PBFT and HotStuff, respectively, on a network of 128 replicas. RCC~\cite{gupta2021rcc}, another Multi-BFT protocol,  improves upon ISS by optimizing the recovery mechanism.

\begin{figure}[t]
\centering
\includegraphics[width=0.95\linewidth]{ 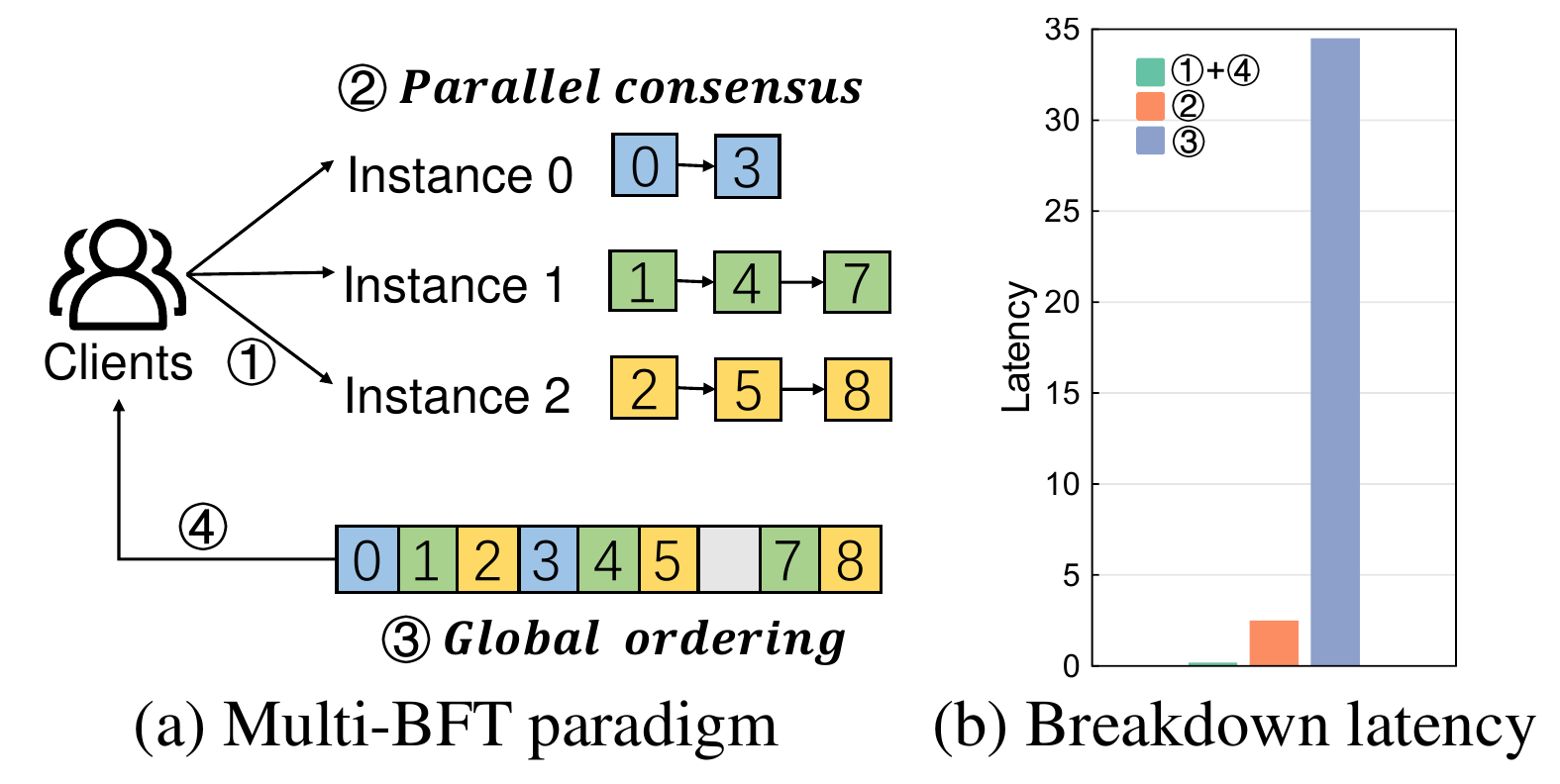}
\caption{\textbf{(a) Multi-BFT paradigm.} The $j$th block produced by instance $i$ is denoted as $B_j^i$. 
\textbf{(b) Breakdown latency with a straggler.} The green bar refers to transaction transmission delay (\ding{172} and \ding{175}), the orange refers to the delay of a block being delivered from consensus (\ding{173}), and the  black refers to the global ordering delay (\ding{174}).
}
\label{fig:motivation}
\end{figure}

\bheading{Our motivation.} The global ordering mechanism in Multi-BFT consensus, 
while essential for ensuring consistency, significantly limits concurrency and can increase latency. By enforcing a strict, serialized sequence across all transactions, global ordering prevents replicas from processing non-conflicting transactions concurrently, even when they target different records 
and could be processed in parallel. For example, consider two independent payment transactions: one in which Alice transfers funds from her account and another in which Bob transfers funds from his account.
In existing Multi-BFT consensus protocols~\cite{MIR-BFT, avarikioti2023fnf, gupta2021rcc, stathakopoulou2022state, Ladon2025}, both transactions would need to wait for a globally agreed-upon order before execution, blocking them from concurrent processing despite having no direct conflicts.

Second, global ordering increases the overall system latency, particularly when there are straggler instances that are much slower than other instances. As illustrated in~\figref{fig:motivation}a, if Instance 0 is a straggler, the missing block (\ie, block 6)  creates a gap in the global log, 
which then prevents subsequent blocks (\ie, blocks 7 and 8) from being executed until the missing block appears. 
As illustrated in~\figref{fig:motivation}b,  the latency breakdown for ISS~\cite{stathakopoulou2022state} with 16 replicas in a WAN setting
shows that global ordering delays can account for up to $92.8\%$ of the total latency when there is a straggler instance that is $10\times$ slower than other instances. Specifically, while the consensus instance delivers a transaction in about 2.5 seconds, the global ordering process takes significantly longer—an additional 34.5 seconds—highlighting the substantial delay caused by global ordering.
(Further evaluation results are presented in Sec.\ref{sec:evaluation}.) 

The expensive global ordering significantly affects the performance of Multi-BFT, especially with stragglers. Thus, this raises a vital question: 
\textit{How to minimize the use of global ordering in Multi-BFT consensus?}





\bheading{Our approach.} We observe that there are two primary classes of transactions in blockchain-based financial applications: \textit{conflict-free transactions }(\eg, payments {with different payers})~\cite{bentov2016cryptocurrencies, collins2020online} and \textit{general non-commutative transactions} (\eg, smart contracts {modifying the same record}). The conflict-free transactions inherently enable concurrent execution. Notably, conflict-free transactions account for a significant proportion of the overall transactions. For instance, more than 46\% of transactions in Ethereum are conflict-free payment transactions. To optimize performance, we present \sysname, which accelerates the processing of conflict-free transactions by introducing concurrency, while ensuring support for general non-commutative transactions, which may require stricter ordering due to dependencies between operations. Unlike traditional systems where a transaction can only be executed once it has received a global order and all preceding blocks in the global sequence have been executed, our approach allows for a more relaxed execution condition. 

\iheading{Conflict-free transactions.} We often observe that there are two types of operations in payment transactions: incremental (e.g., adding funds) and decremental (e.g., withdrawing funds). 
Incremental operations are inherently commutative, as they only increase the balance of the accounts. Decrements in different accounts are also commutative, as they do not affect each other's state. This means that many transactions can be executed concurrently without using strict ordering.

To accomplish this, \sysname assigns client transactions to instances based on the payer, ensuring that transactions from the same payer are processed within the same instance. This \textit{transaction partition} mechanism introduces concurrency and prevents double spending~\cite{karame2012double}. However, some transactions may involve multiple payers. In such cases, the transaction is assigned to multiple instances, each instance managing the actions of one payer, allowing the system to handle complex transactions without sacrificing concurrency. 
However, this approach also introduces a challenge in maintaining transaction atomicity, as some payers might succeed while others fail, potentially leaving the transaction in an incomplete state.

To ensure atomicity, we draw inspiration from the \textit{escrow transactions}~\cite{o1986escrow} to design an \textit{escrow mechanism} which temporarily reserves the transfer amounts for each payer in an escrow state.  The transaction is only confirmed if all escrow requests are successfully committed. If any escrow request fails, the reserved funds for all other escrow requests are also released, ensuring transactional atomicity and consistency across instances. 


\iheading{General non-commutative transactions.} Beyond conflict-free transactions, there are also general transactions—such as those involving smart contracts—that include non-commutative operations. 
For these, we enable confirmation through global ordering to ensure safety and consistency. 
This hybrid design allows \sysname to maximize concurrency for commutative transactions while maintaining the necessary cross-instance guarantees to support general non-commutative transactions. To facilitate seamless interaction between contract transactions (which require strict global ordering) and standard payment transactions, we implement the previously mentioned \textit{escrow mechanism}. This mechanism allows contract-related payments to be temporarily escrowed within the partial log, preventing them from delaying payment transaction confirmations. Consequently, payment transactions can proceed without interruption, while contract transactions maintain the necessary ordering constraints. 




We build an end-to-end prototype of \sysname in Go~\cite{golang} and conduct an extensive evaluation on AWS. We use a real-world dataset with 200,000 transactions from 18,000 active users on the Ethereum network. This dataset provides a more accurate representation of transaction behaviors, which leads to more valid and reliable results. 
We compare \sysname with the state-of-the-art Multi-BFT consensus, including ISS~\cite{stathakopoulou2022state}, Mir-BFT~\cite{MIR-BFT}, RCC~\cite{gupta2021rcc}, DQBFT~\cite{dqbft} and Ladon~\cite{Ladon2025} in terms of throughput and latency. 

\bheading{Our contributions.} To summarize, this paper makes
the following contributions: 

\begin{packeditemize}

\item We introduce \sysname, a Multi-BFT consensus that leverages partial ordering of transactions to enhance concurrency while maintaining consistency across instances. 

\item We propose a novel escrow mechanism to ensure transaction atomicity and seamless integration of partially ordered and globally ordered transactions.

\item We evaluate \sysname with extensive experiments over WAN and LAN with 8-128 replicas distributed across $4$ regions. In the WAN setting, \sysname achieves at most 87\% lower latency than other protocols on $128$ replicas with a straggler. The LAN setting exhibits a similar trend.
\end{packeditemize}

\section{Motivations}\label{sec:partial}
We first examine blockchain scenarios where partial ordering is sufficient for transactions and then highlight the necessity of hybrid ordering. 
Transactions in blockchains typically fall into two categories: \textit{payment} (\ie, conflict-free transactions) and \textit{contract transactions} (\ie, general transactions). Payment transactions are straightforward exchanges, by which clients transfer funds. Each payment transaction involves incremental and decremental operations on payers' and payees' accounts. Due to the commutative characteristic of these operations, partial ordering is sufficient. 

In contrast, contract transactions represent more complex interactions involving various operations beyond simple transfers. Specifically, the operations on state variables can be accessed by multiple clients.
Unlike payment transactions, contract transactions often include non-commutative operations where the order in which they are performed affects the result, such as assignments.
This distinction between payment and contract transactions highlights \sysname's dual approach: maximizing concurrency for payment transactions while employing global ordering for contract transactions to maintain system consistency. 

\subsection{Why Partial Ordering Works?}
In a typical payment transaction, decremental operations reduce the payers' account balance, while incremental operations increase the payees' account balance. By confining transactions that involve the same payer to a single instance, transactions across instances can be executed concurrently without global ordering. To better understand the benefits of partial ordering, let’s consider three simple single-payer, single-payee transactions: $tx_1$: Alice $\rightarrow$ Carol, $tx_2$: Bob $\rightarrow$ Carol, and $tx_3$: Alice $\rightarrow$ Bob, where $\rightarrow$ denotes the transfer tokens from the lefter to the righter. Based on the three transactions, we have the following two observations. 




\begin{packeditemize}
    \item \textit{Observation 1:} {If two transactions have different payers and neither transaction affects the balance of the other's payer, they can be executed concurrently.}
\end{packeditemize}

{
\bheading{Example (independence): Transactions $tx_1$ and $tx_2$ have different payers and do not affect each other's payer balance.} 
Even though both transactions share Carol as a payee, they have different payers, Alice and Bob. Since \(tx_1\) and \(tx_2\) consume tokens from different payer accounts, their execution order does not impact the outcome; both transactions succeed regardless of whether \(tx_1\) or \(tx_2\) executes first.
}

\begin{packeditemize}
    \item \textit{Observation 2:} {If one transaction may affect the balance of the other transaction's payer, they may need to be executed sequentially.}
\end{packeditemize}

{
\bheading{Example (dependency): The payee of $tx_3$ is the payer of $tx_2$.}  
The transaction $tx_3$ affects Bob's balance, and $tx_2$ depends on Bob's balance to succeed. If Bob has sufficient balance, $tx_2$ and $tx_3$ can be executed concurrently. If Bob does not have enough balance to support $tx_2$ before $tx_3$ executes, then $tx_3$ must be executed first to ensure that $tx_2$ can succeed. 
}

{
\bheading{Example (confliction): Transactions $tx_1$ and $tx_3$ have the same payer.} 
Since both transactions share Alice as the payer, they directly reduce her account balance. If Alice has sufficient balance, $tx_1$ and $tx_3$ can be executed concurrently. If Alice has limited funds, the order of \(tx_1\) and \(tx_3\) becomes crucial: executing \(tx_1\) first may deplete her balance, resulting in a successful transfer to Bob but causing $tx_3$ to fail. Conversely, if $tx_3$ is executed first, $tx_1$ may fail due to insufficient funds.
}

{
These examples illustrate that we can enforce effective partial ordering to manage transaction dependencies and prevent conflicts. Transactions involving the same payer are assigned to the same instance and executed sequentially within that instance, ensuring consistency. For dependencies across instances, \sysname establishes a cross-instance partial order by generating references between dependent transactions. These designs guarantee that if one transaction affects the payer's balance in a way that impacts another transaction's execution, the two transactions are processed in the correct sequence to preserve consistency.
}

For transactions involving multiple payers and payees, each transaction can be split into multiple single-payer, single-payee sub-transactions, thereby adhering to the above partial ordering rules. To guarantee atomicity in multi-payer transactions, \sysname employs an escrow mechanism discussed in a later section.
This incurs minimal overhead, as each sub-transaction can be processed concurrently, and the escrow mechanism adds only a slight processing cost. 

\subsection{Why Needs {Hybrid} Ordering?}
While partial ordering suffices for payment transactions, contract transactions often involve complex, non-commutative operations (\eg, assignments) that require stricter ordering. The outcome of these operations depends on execution order, necessitating global ordering to maintain consistent results.



\begin{packeditemize}
    \item \textit{Observation 3:} Transactions {modifying the same record} 
    must be executed sequentially to preserve consistency.
\end{packeditemize}

\bheading{Example: Transactions {$tx_1$ and $tx_2$ modify the same record}.} {Assume that transaction \(tx_1\) assigns \(value_1\) to record \(o\), while transaction \(tx_2\) assigns\(value_2\) to record \(o\).} Here, the order between \(tx_1\) and \(tx_2\) is crucial: executing \(tx_1\) first and then \(tx_2\) results in \(o\) having \(value_2\), while reversing the order leaves \(o\) with \(value_1\).

Although transactions {modifying distinct records} can execute concurrently without ordering, most contract transactions involve non-commutative operations that inherently depend on execution order. 
Therefore, a global ordering mechanism is essential to maintain consistency across replicas and ensure that all contract transactions yield the same result. 

{
The above examples underscore the necessity of hybrid ordering in \sysname. For payment transactions, partial ordering maximizes concurrency, while contract transactions with complex operations rely on global ordering to ensure consistency across replicas. 
}




\section{System Model and Properties
} \label{sec:model}
\subsection{System Model}\label{subsec:systemmodel}
We consider a system composed of $n = \{r_i\}_{i=0}^{n-1}$ replicas, denoted as the set $\mathcal{N}$.  We assume a subset of up to $f$ replicas are \textit{Byzantine}, represented by the set $\mathcal{F}$, where $|\mathcal{F}| \leq f$ and $n \ge 3f+1$. The remaining replicas, denoted as $\mathcal{H} = \mathcal{N} \setminus \mathcal{F}$, are honest and strictly follow the protocol. All Byzantine replicas are assumed to be controlled by a single adversary, which is computationally bounded and cannot break cryptographic primitives.
Each replica $r_i$ has a public/private key pair $(pk_i, sk_i)$ established through a public-key infrastructure (PKI), enabling them to sign and verify messages.

\bheading{Network model.} Honest replicas are connected through authenticated point-to-point channels.
We adopt the partial synchrony model introduced by Dwork et al. \cite{dwork1988consensus}, commonly used in BFT consensus protocols \cite{pbft1999, hotstuff}. This model defines an unknown Global Stabilization Time (\textsf{GST}), after which the system behaves synchronously with a known message delivery bound $\Delta$. Formally, for any two honest replicas $r_i, r_j \in \mathcal{H}$, and any time $t \geq \textsf{GST}$, a message sent from $r_i$ to $r_j$ at time $t$ will be delivered by time $t + \Delta$. 

\subsection{Data Model}\label{datamodel}

\bheading{Object.} We follow the object-centric design~\cite{weihl1988commutativity,blackshear2023sui}, in which objects provide operations that can be called by transactions to examine and modify the object’s state. These objects are long-lived, like accounts, and are represented as $o = (key, value, op, con, type)$. Here, $key$ is a cryptographically unique identifier, and $value$ denotes the object’s current state, which can be updated as operations $op$ are performed. The $op$ attribute specifies the operation to be executed on the object (\eg, increment or decrement). The $con$ attribute sets a condition that must be satisfied following the execution of any operation.
The $type$ attribute categorizes objects as either \textit{owned} or \textit{shared}. 

\iheading{1) Owned objects.} They are associated with a specific owner, with $key$ corresponding to the owner’s address. Owned objects support two operations: incremental and decremental. The decremental operation requires the owner’s authorization through a digital signature. For example, Alice’s account object, which holds her balance, is an owned object. If Alice wishes to transfer tokens, she must authorize the transaction with her digital signature.

\iheading{2) Shared objects.} They have no specific owner and can be processed by anyone with authorization in the smart contract. Shared objects may support additional operations such as complex state changes or contract-specific actions.

\bheading{Transaction.} Transactions are defined as $tx = (O, id, \sigma)$, where $O$ denotes the set of objects $o$ involved in $tx$, each specifies an involved object and the specified operation $op$ to be performed. The $id$ field provides a unique identifier for each transaction. $\sigma$ is the signature, included for owned objects requiring authorization. Each transaction involves at least one owned object, as every transaction must be initiated by a client, whose account is classified as an owned object. 

Transactions are categorized into two types: payment and contract transactions. Payment transactions involve only owned objects, while contract transactions may involve both objects, allowing for greater flexibility in executing complex operations under a smart contract framework.

\bheading{Block.} A block is a tuple $b = (txs, ins, sn, S, \sigma)$, where: \( txs \) represents a batch of transactions and $ins$ indicates the specific instance to which the block belongs, which helps in identifying which instance is responsible for processing the transactions in the block. $sn$ denotes the sequence number of the block, which maintains the order of blocks within an instance. $S$ denotes the system state (defined in \secref{subsec:goal}) on which the block depends, which suggests the state under which the transactions within the block can be executed successfully. $\sigma$ is the cryptographic signature on $B$, which provides authenticity and integrity to the block.

\subsection{Core Components}\label{subsec:preliminaries}
\bheading{Sequenced broadcast (SB).} Sequenced broadcast is a critical variant of Byzantine total order broadcast \cite{cachin2011introduction} that ensures the total ordering of transactions and preserves the consistency among honest replicas.  The SB operates through two fundamental primitives: \textit{broadcast} and \textit{deliver}. A replica acts as the leader to broadcast a transaction with a sequence number, and all replicas collaborate to deliver the transaction with the sequence number. There is a failure detector that can detect leader failures.
SB guarantees that an honest replica will eventually deliver a transaction for every sequence number $sn$ (\ie, \textit{termination property}) and that all honest replicas will deliver the same transaction with the same $sn$ (\ie, \textit{agreement property}). 
In this work, we utilize SB protocol as a black box that inputs client transactions and outputs a sequence of delivered transactions. 

\bheading{Escrow method.} The escrow method is a concurrency control and data consistency mechanism originally developed to handle high-contention scenarios in database systems, particularly when aggregate quantities, such as inventory levels, must be managed under high transactional loads with minimal locking \cite{o1986escrow}. 
In the Escrow method, operations are divided into incremental and decremental actions, each modifying a resource quantity without directly conflicting with others. By assigning each transaction an escrow balance, the system can accommodate multiple requests on the same resource simultaneously, as long as the escrowed quantities suffice for each transaction's demand. If a transaction aborts, the escrowed quantity is simply returned to the available pool. 
In this work, we leverage the Escrow method to handle concurrent transaction execution in replicated state environments.

\subsection{System Goals}\label{subsec:goal}
We consider a Multi-BFT system composed of \(m\) SB instances, indexed from $0$ to $m-1$, which take clients' transactions as input. Each instance has a leader who broadcasts transactions with sequence numbers and coordinates all replicas to deliver a sequence of transactions. Transactions are executed after being partially ordered within an SB instance or globally ordered across instances. A transaction is confirmed once it is executed, either successfully or unsuccessfully. The system’s state \( S \) can be represented as a tuple, where each element corresponds to the maximum sequence number \( sn \) of an SB instance. 
Specifically, the state of the Multi-BFT system is given as \( S = (sn_0, sn_1, \ldots, sn_{m-1}) \), where \( sn_i \) denotes the maximum sequence number for the instance indexed by \( i \). The state of the system in the view of replica $r_i$ is denoted as $S_{r_i}$. 
The system must ensure the following properties:

\begin{packeditemize}
\item \textbf{Safety}: 
If two honest replicas reach the same state \( S \), they must have consistent values for all objects. 



\item \textbf{Liveness}: Given a transaction \( tx \) from
a correct client \( c \), $tx$ is eventually confirmed by all honest replicas.
\end{packeditemize}


\begin{figure}[t]
	\centering
    \includegraphics[width=1\linewidth]{ 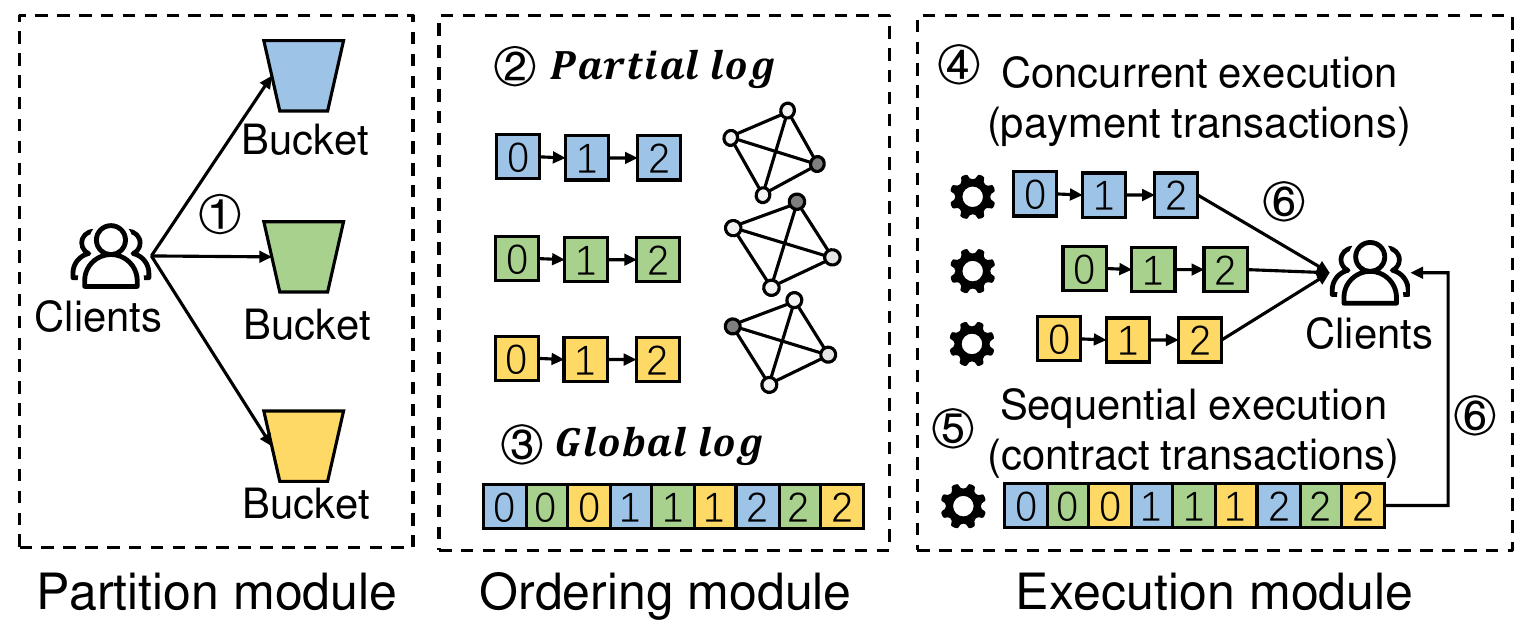}
	\caption{\textbf{An overview of \sysname.} Transactions are partitioned into distinct buckets, ordered according to their types, and finally executed.} 
	\label{fig:overview}
\end{figure}

\section{\sysname Overview} \label{sec:overview}

\subsection{System Architecture} \label{subsec:architecture}
The architectural overview of \sysname is provided in \figref{fig:overview}, which consists of three modules: partition module, ordering module, and executions module.

\bheading{Partition module.} It employs the bucket mechanism first introduced by Mir-BFT~\cite{MIR-BFT} to partition client transactions into buckets, with each bucket assigned to a unique instance. Each transaction is assigned to one or more buckets based on the objects it contains. 
Specifically, we define an assignment function $\mathsf{assign}$ that maps each owned object to a specific bucket. 
For example, $\mathsf{assign}$ may compute a bucket index for an object $o$ as the hash of the $key$ modulo $m$, where \( m \) is the total number of instances. Furthermore, the function can also be designed to balance loads across instances and minimize cross-instance interactions.
Transactions containing decremental operations on owned objects are assigned to the corresponding buckets of those objects. 

\bheading{Ordering module.} It contains multiple SB instances and a global ordering algorithm. Each instance independently selects a set of transactions from its corresponding bucket as input and outputs a sequence, or \textit{partial log} of transactions, agreed upon by all replicas within that instance. Transactions in a partial log are \textit{partially ordered}, meaning they are ordered within an instance. Payment transactions can be output in partial logs as soon as they are partially ordered, while contract transactions require a global order. The global ordering algorithm merged the partial logs from all instances into a \textit{global log}, which imposes a \textit{global order} across all transactions in the system. 

\bheading{Execution module.} It processes the ordered transactions produced by the ordering module. Payment transactions from each partial log can be executed concurrently, leveraging their partial ordering for parallel processing. In contrast, contract transactions in the global log are executed sequentially to maintain consistency across all instances. The execution results are then sent back to the clients.

\subsection{Transaction Workflow}
Each transaction in \sysname follows a workflow through the partition, ordering, and execution modules. Below, we outline the processing flow for a single transaction in \sysname.

\bheading{\ding{172}} Upon receiving a transaction, a replica assigns it to one or more buckets in the partition module, based on the involved owned objects (\secref{sec:partitionAlgorithm}).

\bheading{\ding{173}} The transaction is then packed into blocks and broadcast within the SB instances, which eventually deliver it with a designated sequence number (\secref{sec:orderAlgorithm}).

\bheading{\ding{174}} Contract transactions delivered by the instances are merged into a global log to establish a global order, while payment transactions can bypass this step (\secref{sec:orderAlgorithm}).

\bheading{\ding{175}} Payment transactions from partial logs are executed concurrently (\secref{sec:executeAlgorithm}).

\bheading{\ding{176}} Contract transactions in the global log are executed sequentially (\secref{sec:executeAlgorithm}).

\bheading{\ding{177}} Once successfully executed, the transaction is confirmed, and a response is sent to the client.


\subsection{Challenges and Solutions}\label{subsec:challenge}

This section discusses two key challenges of the hybrid ordering design of \sysname and corresponding solutions.

\bheading{Challenge-I: Ensuring atomicity of transactions.} A payment transaction involving multiple owned objects with decremental operations (\ie, multiple payers) may be assigned to multiple instances according to the partition module. If one payer executes the payment in their instance successfully while another payer fails in theirs, the transaction loses atomicity, leading to inconsistent states across instances. Maintaining atomicity across multiple instances, therefore, poses a significant challenge.

\bheading{Solution-I: Escrow mechanism for atomicity.} We design an escrow mechanism inspired by the escrow method proposed in~\cite{o1986escrow}, but simplify the escrow process by focusing on the temporary reservation of funds.
When a transaction with multiple payers is assigned to different instances, each instance performs an escrow operation on its respective payer’s balance, temporarily deducting the amount from each payer. If all escrow requests within the transaction are successfully committed, these temporary deductions become permanent. However, if any escrow request fails, all escrows associated with this transaction are canceled, and the amounts are refunded to the respective payers. Once a replica observes that all payers in a transaction have successfully escrowed their amounts, it commits all escrow requests. If any escrow request fails, it aborts all escrowed requests for the transaction, ensuring atomicity is preserved.

\bheading{Challenge-II: Avoiding payment transactions blocking.} When a contract transaction and a subsequent payment transaction share the same payer, the contract transaction may block the payment transaction due to the need for global ordering. Since the contract transaction must be globally ordered for consistency across all instances, any dependent transactions involving the same payer are delayed until the contract transaction is confirmed. This dependency leads to inefficiencies and increased latency for the payment transaction.

\bheading{Solution-II: Escrow mechanism to avoid blocking.}  
To address this issue, we extend the escrow mechanism to allow payment transactions to proceed without being blocked by pending contract transactions. When a replica encounters a contract transaction involving a payer, it escrows the required amounts from the payer. This approach allows subsequent payment transactions involving the same payer to be evaluated and processed as though the contract transaction’s decremental operation has been executed, thus avoiding blocking.
Once the contract transaction reaches global order and is confirmed, the replica either commits the escrowed amounts if the transaction executes successfully or refunds the amounts if it fails. This solution ensures that contract transactions do not block subsequent payments, improving system efficiency while maintaining consistency.

\begin{algorithm}[!t]
\caption{\sysname Algorithm for replica $r$}
\label{algorithm:main}
\begin{algorithmic}[1]
\State \textbf{upon} initialize system
\State\quad \textbf{for} $i \in [0, m-1]$ \Comment{\textcolor{purple}{$m$ is the number of instances}}
\State\quad\quad \textbf{if} $\mathsf{isLeader}$($i, r$) \textbf{then} \Comment{\textcolor{purple}{$r$ is the leader of $instance_i$}} 
\State\quad\quad\quad \textbf{for} {$sn \in \{0,1,2,...\}$}
\State\quad\quad\quad\quad $b.S \gets \mathsf{currentState}$
\State\quad\quad\quad\quad $b.txs \gets \mathsf{pullValidTx}(bucket_i, b.S)$
\State\quad\quad\quad\quad $b.ins \gets i$; $b.sn \gets sn$; $b.\sigma \gets \mathsf{sign}(b,r)$
\State\quad\quad\quad\quad \textbf{trigger}$\langle\mathsf{sb\text{-}broadcast}|b\rangle$  
\State

\State \textbf{upon} receive $tx$ \Comment{\textcolor{purple}{partition transactions}}
\State\quad \textbf{if} $\mathsf{validateTx}(tx)$ \textbf{then} 
\State\quad\quad \textbf{for} $o \in tx.O$  \textbf{where}   \Statex\quad\quad $o.type = owned \wedge o.op = decremental$  
\State\quad\quad\quad $i \gets \mathsf{assign}(o)$ 
\State\quad\quad\quad $\mathsf{push} (tx, bucket_i)$ 
\State

\State \textbf{upon event} $\langle\mathsf{sb\text{-}deliver}|b\rangle$ \Comment{\textcolor{purple}{order transactions}}
\State\quad $plog[b.ins][b.sn] \gets b$
\State\quad $\mathsf{globalOrder}$($b, glog$)

\State

\State \textbf{upon} $\mathsf{firstPending}(plog[i]) \neq \bot$ \Comment{\textcolor{purple}{execute transactions}}
\State \quad $tx \gets \mathsf{firstPending}(plog[i])$
\State \quad \textbf{for} $o \in tx.O$   \textbf{where}   
\Statex \quad $\mathsf{assign}(o) = i\wedge o.type = owned \wedge o.op = decremental$ 
\State \quad\quad $\mathsf{escrow}(o,tx)$ 
\State \quad\quad \textbf{if} $(o,tx) \notin elog$
\State \quad\quad\quad  $\mathsf{abortEscrow}(tx)$
\State \quad\quad\quad $\mathsf{abort}(tx)$ \Comment{\textcolor{purple}{remove $tx$ from all logs}}
\State \quad\quad \textbf{else if} $\mathsf{allEscrowed}(tx) \gets true$ 
\Statex \quad\quad\quad\quad\quad $\wedge tx.type = payment$
\State \quad\quad\quad $\mathsf{commitEscrow}(tx)$
\State \quad\quad\quad  \textbf{for} $o \in tx.O \ \textbf{where} \ o.op = incremental$ 
\State \quad\quad\quad\quad  $o.value \gets \mathsf{apply}(o,tx.o.op)$ 
\State

\State \textbf{upon} $\mathsf{firstPending}(glog) \neq \bot$ 
\State \quad $tx \gets \mathsf{firstPending}(glog)$
\State \quad \textbf{if} $\mathsf{isLastPosition}(tx, glog)$
\State \quad\quad \textbf{if} $\mathsf{exe}(tx) = true \wedge \mathsf{allEscrowed}(tx) = true$
\State \quad\quad\quad $\mathsf{commitEscrow}(tx)$
\State \quad\quad \textbf{else}
\State \quad\quad\quad $\mathsf{abortEscrow}(tx)$
\State \quad\quad\quad $\mathsf{abort}(tx)$ \Comment{\textcolor{purple}{remove $tx$ from $glog$}}
\State \quad \textbf{else} 
\State \quad\quad $\mathsf{remove}(tx,glog)$ \Comment{\textcolor{purple}{remove this $tx$ from $glog$}}

\end{algorithmic}
\end{algorithm}
\section{\sysname Algorithm} \label{sec:algorithm}
In this section, we present the core algorithm of \sysname, as illustrated in Algorithm~\ref{algorithm:main}. Overall, \sysname operates in epochs, following the design in~\cite{stathakopoulou2022state, Ladon2025}, with specific sequence numbers assigned to each instance in each epoch. A replica ends an epoch only after confirming all sequence numbers assigned to it within that epoch.
In each epoch, \sysname handles several key processes: transaction partitioning (\secref{sec:partitionAlgorithm}), partial and global ordering (\secref{sec:orderAlgorithm}), and transaction execution (\secref{sec:executeAlgorithm}). At the end of each epoch, it creates checkpoints and performs garbage collection (\secref{sec:checkpoint}).

\subsection{Partition Transactions}\label{sec:partitionAlgorithm}
Upon receiving a transaction \( tx \), replica \( r \)  adds \( tx \) to one or more buckets based on its involved objects.  First, it verifies the validity of the transaction’s format and checks the owner's signature. 
Once verified, replica \( r \) calls the \( \mathsf{assign}(o) \) function to determine the appropriate bucket(s) for \( tx \). Specifically, the replica iterates through each object involved in the transaction. For each object, if the object is of type \textit{owned} and the transaction's operation on the object is decremental, the replica calls \( \mathsf{assign}(o) \)  to determine the appropriate bucket index for the object. After obtaining the bucket index, \( tx \) is pushed into the corresponding bucket. Each bucket is an append-only list for backups but allows both push and pull operations for its leader. If \( tx \) is already in a bucket, it will not be added again to avoid duplication.

\subsection{Order Transactions}\label{sec:orderAlgorithm}
We employ two types of logs for transaction ordering: the partial log, denoted as \( plog \), and the global log, denoted as \( glog \). Each log consists of multiple entries, with each entry capable of holding a batch of transactions. Each instance maintains its \( plog \), which is used for the partial ordering of transactions within that instance. Meanwhile, the system as a whole maintains a single \( glog \) to ensure a global ordering of transactions across the system. 

We treat the SB as a black box, interacting with it through two defined events: $\langle\mathsf{sb\text{-}broadcast}|b\rangle$ and $\langle\mathsf{sb\text{-}deliver}|b\rangle$. Here, $\langle\mathsf{sb\text{-}broadcast}|b\rangle$ represents the event where a block $b$ is broadcasted into the SB, while $\langle\mathsf{sb\text{-}deliver}|b\rangle$ signifies the event when the SB delivers the block $b$ after ordering. 

\begin{algorithm}[t]
\caption{Escrow method}
\label{algorithm:escrow}
\begin{algorithmic}[1]
\State \textbf{function} $\mathsf{escrow}(o, tx)$
\State \quad $value \gets \mathsf{apply}(o, o.op)$
\State \quad \textbf{if} $value \geq o.con$
\State \quad\quad $o.value \gets value$
\State \quad\quad $\mathsf{append}((o, tx), elog)$
\State

\State  \textbf{function} $\mathsf{allEscrowed}(tx)$
\State \quad $\mathsf{allEscrowed}(tx) \gets true$
\State \quad \textbf{if} $\exists \ o \in tx.O \ \textbf{s.t.} \ ((o.type = \textit{owned})  \wedge$  
\Statex \quad\quad $ o.op = \textit{decremental})\wedge ((o, tx) \notin elog)$
\State \quad\quad $\mathsf{allEscrowed}(tx) \gets \mathsf{false}$

\State
\State \textbf{function} $\mathsf{commitEscrow}(tx)$
\State \quad \textbf{for} $o \in tx.O \wedge (o,tx) \in elog$ 
\State \quad\quad\quad $\mathsf{remove}((o, tx), elog)$
\State

\State \textbf{function} $\mathsf{abortEscrow}(tx)$
\State \quad \textbf{for} $o \in tx.O \wedge (o,tx) \in elog$ 
\State \quad\quad $\mathsf{undo}(o, o.op)$
\State \quad\quad $\mathsf{remove}((o, tx), elog)$

\end{algorithmic}
\end{algorithm}
\bheading{Broadcast transactions.} Upon initializing \sysname, for each instance, if replica $r$ is the leader of $instance_i$, it enters a loop 
in which it creates a block $b$ for each sequence number $sn$ in the current epoch and then broadcasts $b$ in the SB instance. 
In each iteration, the leader creates a block \(b\) and references the current system state $S$, which consists of the sequence number of the last block on each instance. Using this state $S$, the leader pulls a specified number of the oldest transactions from the instance's bucket. These transactions are valid under \(S\), meaning that \(S\) provides a consistent baseline against which each transaction's prerequisites are satisfied. Backup replicas can check the validation of transactions based on this state.
If there are insufficient transactions to meet this number, the leader waits for a timeout and then pulls available transactions from the bucket. 
The leader then sets the instance index and sequence number for the block, signs it, and broadcasts it in the SB instance, where all replicas participate in ordering and delivering the block. If \(r\) is not the leader, it participates in the SB instances as a backup. 

\bheading{Deliver transactions.} 
Upon delivering a block $b$ from an SB instance with a sequence number, the replica orders the block by appending it to the partial log \( plog \) at the instance index \( b.ins \) and sequence number \( b.sn \) as well as invokes the \( \mathsf{globalOrder} \) function to append it to the global log $glog$.  
{Each replica locally computes the global index of a block based on its parameters, which determine its position in the global order. To achieve this, \sysname adopts the dynamic global ordering algorithm from Ladon~\cite{Ladon2025}, which is detailed in Appendix~\ref{appen:globalorder}.}

\bheading{Failure detector.} In \sysname, a failure detection module (also called the view-change mechanism) is integrated into the SB protocol. This module enables replicas to change the faulty leader for an SB instance and has been widely used in prior work~\cite{castro2002practical, hotstuff, FireLedger, gueta2019sbft}. 
For instance, in PBFT, replicas begin a \textit{view change} to replace the leader when suspecting the leader of Byzantine behavior. When $l_i$, the leader of $instance_i$ fails at sequence number $sn$, the recovery process involves three steps:
1) All honest replicas detect the failure of $l_i$.
2) All honest replicas agree on the state of $instance_i$ and the new leader $l^{'}_i$.
3) All honest replicas restart $instance_i$ from $sn$.

The failure detector is used to deal with censorship attacks and spoofing attacks, In a censorship attack, Byzantine leaders can selectively ignore or censor certain transactions. 
To address this, we require that a client broadcast the transaction $tx$ to at least $f+1$ replicas, where $f$ is the maximum number of faulty replicas. This ensures that at least one honest replica will receive and push $tx$ to the corresponding bucket. If the leader fails to propose $tx$ within a reasonable period, other replicas can detect this failure and request a leader replacement. 
In a spoofing attack, a Byzantine leader might broadcast a block referencing an incorrect state $S$. If a replica finds that a transaction within the block is invalid based on the specified state $S$, it detects the leader's failure. The replica can also request the blocks it missed in the state $S$ from the leader. Should the leader be unable to provide these blocks, it detects the leader's failure as well. Additionally, we limit the number of messages that one replica can send to another within a given period to prevent Denial of Service (DoS) attacks.

\subsection{Execute Transaction}\label{sec:executeAlgorithm}
To introduce the transaction execution process, we first provide an overview of the escrow mechanism, a foundational component of \sysname’s execution phase.

\bheading{Escrow mechanism.}  Each replica maintains an escrow log $elog$ to manage the escrow of objects and transactions. The Escrow mechanism in \sysname is shown in Algorithm~\ref{algorithm:escrow}.

\begin{packeditemize}
    \item The $\mathsf{escrow}$ function attempts to perform an escrow operation on a single object $o$ within the transaction $tx$. It first calculates the $value$ by applying the operation $o.op$ on $o$'s current state. 
If the resulting $value$ satisfies the condition $o.con$, the function updates $o.value$ and appends this escrow request to the $elog$.

    \item The $\mathsf{allEscrowed}$ function is used to check whether owned objects in $tx$ with the decremental operation have been successfully escrowed. It initializes $allEscrowed(tx)$ as $true$, then iterates over each object $o$ in $tx$. If an object with a decremental operation and owned type has not been escrowed, $allEscrowed(tx)$ is set to $false$, and the function breaks out of the loop early. 

    \item The $\mathsf{commitEscrow}$ function is called to remove all the escrow requests of $tx$ from the $elog$. 

    \item The $\mathsf{abortEscrow}$ is called to undo and remove all the escrow requests of $tx$ in the $elog$.
\end{packeditemize}

The replica monitors the $plog$s and $glog$ to identify the next transaction ready for execution, referred to as the first pending transaction in the log. The first pending transaction is defined as the transaction for which all preceding transactions in the log, with smaller indexes, have been confirmed. 

\bheading{Execute transactions in $plog$.}
Upon identifying the first pending transaction $tx$ in \( plog[i] \), for each object \( o \) in \( tx \), if \( o \) is an owned object belongs to block $b$ in the current instance with a decremental operation, the replica attempts to perform an $\mathsf{escrow}$ on \( o \) for \( tx \).  The escrow is performed on the system state $b.S$ referred to by the transaction or any subsequent state derived from it through valid updates.
If the escrow operation on \( o \) fails, all escrows in \( tx \) are aborted. Then the transaction \( tx \) is aborted, which means $tx$ is removed from all the partial logs $plog$ and the global log $glog$. If all required escrows for \( tx \) are successful and \( tx \) is a payment transaction, the system commits all escrows in \( tx \), and proceeds to other objects in \( tx \) with incremental operations by applying the incremental operation to update \( o.value \) accordingly.

\bheading{Execute transactions in $glog$.}  When a replica identifies the first pending transaction \( tx \) in \( glog \), it first checks if this is the last occurrence of \( tx \) in \( glog \), as a transaction may appear in multiple positions within \( glog \). If it is not the last occurrence, \( tx \) is simply removed from the current position in \( glog \). If it is the last occurrence, the replica proceeds to execute \( tx \) and verify whether all escrow requests within the transaction are successful. If both the execution and escrow checks are successful, the replica commits all escrow requests associated with \( tx \). If any escrow request fails, all escrows in \( tx \) are aborted, and \( tx \) is removed from \( glog \) in all its occurrences. Notably, executing \( tx \) in the global log (\( glog \)) must strictly align with the global state at its designated position in $glog$.This ensures that \( tx \) is processed consistently across all replicas under the same system state, thereby maintaining overall consistency. 

\subsection{Checkpoint and Garbage Collection}\label{sec:checkpoint}
\sysname runs a simple checkpoint protocol at the end of each epoch. Upon epoch completion, each replica broadcasts a checkpoint message to all other replicas, containing a signed digest summarizing the blocks it processed during that epoch. When a replica receives a sufficient quorum of matching checkpoint messages (\ie, from at least $2f+1$ replicas), it creates a stable checkpoint, which serves as a verified snapshot of the epoch’s blocks. This stable checkpoint enables the replica to securely discard the data from the completed epoch, allowing for efficient garbage collection. Additionally, unexecuted transactions can be discarded, preventing the retention of transactions that will never be executed.

\section{Correctness Analysis} \label{sec:analysis}
In this section, we prove the safety and liveness properties of \sysname. For safety, we show that replicas in the same state execute an identical set of transactions, with payment transactions yielding consistent object values across replicas due to their commutative nature, and contract transactions producing consistent results by being executed sequentially under a global ordering. For liveness, we demonstrate that any transaction received by an honest replica will eventually be proposed and delivered by its corresponding SB instances, with all instances achieving consistent execution outcomes (success or failure). As a result, each transaction will eventually be confirmed by all honest replicas.

\begin{lemma}\label{lemma:exeagree}
If two honest replicas 
reach the same state \( S \), they must successfully execute the same set of transactions.
\end{lemma}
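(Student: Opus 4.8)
The plan is to show that the set of transactions executed by an honest replica is completely determined by its state $S = (sn_0, \ldots, sn_{m-1})$, so that two honest replicas in the same state execute the same set. First I would argue that $S$ determines the contents of every partial log $plog[i]$: by the agreement property of SB, all honest replicas deliver the same block $b$ for each pair $(i, sn)$ with $sn \le sn_i$, so the prefix of $plog[i]$ up to position $sn_i$ is identical across replicas in state $S$. Next I would argue that the deterministic $\mathsf{globalOrder}$ function (the dynamic global ordering algorithm from Ladon, computed locally from block parameters) produces the same $glog$ prefix from the same collection of delivered blocks; hence both replicas have identical $glog$ up to the position reachable from $S$.

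Second, I would show that execution from identical logs is itself deterministic and yields the same confirmed set. The candidates for execution are exactly the transactions appearing in these log prefixes. The execution rules in Algorithm~\ref{algorithm:main} — the $\mathsf{firstPending}$ selection, the $\mathsf{escrow}$ attempt against the state $b.S$ referenced by each block, the $\mathsf{allEscrowed}$ check, and the commit/abort decisions — are all deterministic functions of the object values and the $elog$, which in turn evolve deterministically from the shared initial state. The one subtlety is that $plog$ execution of payment transactions is concurrent across instances; here I would invoke the commutativity argument: increments and decrements on distinct owned objects commute, transactions sharing a payer are confined to one instance (by the $\mathsf{assign}$-based partition in $\mathsf{push}$) and hence are linearly ordered there, and the escrow mechanism guarantees that a multi-payer transaction is committed iff all its per-payer escrows succeed — a condition that depends only on the (state-determined) balances, not on interleaving order. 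So whether each $tx$ is confirmed successfully or aborted is a function of $S$ alone.

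I would then combine the two parts: both replicas see identical $plog$ and $glog$ prefixes, apply identical deterministic (up to commutativity) execution logic, and therefore confirm — successfully or unsuccessfully — exactly the same set of transactions. A short induction on the number of confirmed transactions, ordered by log position, formalizes this: the base case is the empty set at the genesis state, and the inductive step observes that the next $\mathsf{firstPending}$ transaction and the outcome of its escrow/execution are uniquely determined by the current (shared) object values, $elog$, and log contents.

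The main obstacle I anticipate is rigorously handling the concurrent execution of payment transactions across partial logs: I must show that the \emph{set} of confirmed transactions (and each one's success/failure verdict) is independent of the nondeterministic interleaving between instances. This requires carefully stating the commutativity invariant — that the escrow log state and every owned object's value, after processing any prefix-consistent interleaving, depend only on which transactions have been confirmed, not on the order — and verifying that the $\mathsf{escrow}$ condition check ($value \ge o.con$) never changes verdict under reordering, which relies on decrements being confined per-payer to a single instance and increments being unconditionally safe. Once this invariant is pinned down, the rest is routine.
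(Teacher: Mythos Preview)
Your proposal is correct and shares the paper's skeleton: SB agreement forces identical $plog$ prefixes at each $sn_i$, the deterministic $\mathsf{globalOrder}$ then yields identical $glog$ prefixes, and execution from identical logs gives identical confirmed sets. Where you diverge is in how you discharge the concurrency issue you flag as the ``main obstacle.'' You plan a commutativity invariant (decrements confined per-payer to one instance, increments unconditional) plus an induction over confirmed transactions to show the success/abort verdict is interleaving-independent. The paper instead dispatches this in a single sentence: each payment transaction in block $b$ was validated by the leader against the agreed-upon reference state $b.S$, and the escrow is performed at $b.S$ or a subsequent state derived from it through valid updates, so the escrow is \emph{guaranteed to succeed} regardless of which replica executes it or in what interleaving. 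Your route is more careful and would go through; the paper's is terser but leans entirely on the $b.S$ reference point, which you mention but do not exploit as the crux.
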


\begin{proof}
Let \( r_1 \) and \( r_2 \) be two honest replicas in the same state \( S = (sn_0, sn_1, \ldots, sn_{m-1}) \), where \( sn_i \) represents the maximum sequence number of transactions delivered by SB instance \( i \). By the agreement property of the SB protocol, both replicas \( r_1 \) and \( r_2 \) must have delivered the same sequence of transactions for each SB instance up to \( sn_i \).  For contract transactions, which are globally ordered, both replicas execute them in the same sequence and under the same state, ensuring that the outcome is identical. {For a payment transaction in a block $b$, which refers to the state $b.S$ agreed upon by all the honest replicas, the execution is guaranteed to succeed as long as both replicas execute them under $b.S$ or any subsequent state derived from it through valid updates.}

\end{proof}

\begin{lemma}\label{lemma:ownedvalue}
If two honest replicas successfully execute the same set of transactions \( T \), they must have consistent final values for all owned objects involved in \( T \).
\end{lemma}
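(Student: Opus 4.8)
The plan is to argue that the final value of each owned object is a function only of the \emph{multiset} of operations applied to it, not of the order in which those operations are applied — so that once two honest replicas agree on the set $T$ of successfully executed transactions, they must agree on every owned object's value. First I would fix an arbitrary owned object $o$ and collect the suboperations on $o$ coming from transactions in $T$. By the data model, owned objects support only incremental and decremental operations; both are of the form $value \mapsto value \pm \delta$ for some amount $\delta$ determined by the transaction, and such maps commute. Hence for any two orderings of the operations in which all intermediate conditions $o.con$ are met, the resulting value is the same, namely the initial value plus the net sum of increments minus decrements.

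The key remaining gap is that commutativity of the \emph{maps} is not by itself enough: an operation is only actually applied if its post-state satisfies $o.con$ (see the \textsf{escrow} function in Algorithm~\ref{algorithm:escrow}, and the $plog$ execution rule), so in principle one order of decrements could pass all the conditions while another order causes an intermediate failure. The second step, and the main obstacle, is therefore to rule this out for transactions in $T$: since $T$ is exactly the set of transactions that \emph{both} replicas executed \emph{successfully}, every decremental suboperation on $o$ actually committed on both replicas, and the escrow mechanism guarantees that an amount is only committed when the reservation succeeded under a state $b.S$ (or a valid successor) shared by all honest replicas. I would invoke Lemma~\ref{lemma:exeagree} here: the two replicas executed the same set $T$, and by the object-centric partitioning all decremental operations on a given owned object $o$ are confined to a single SB instance and thus are totally ordered identically at both replicas, so the intermediate conditions are checked against the same sequence of states. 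Consequently no ``phantom'' condition failure can distinguish the two replicas, and every operation in the multiset is genuinely applied at both.

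Given that, the final step is a short induction (or direct summation) over the operations on $o$: starting from the common initial value of $o$ — which both replicas inherited from the last common stable checkpoint, per Section~\ref{sec:checkpoint} — applying the same multiset of commuting $\pm\delta$ maps yields the same final value at $r_1$ and $r_2$. Repeating the argument for every owned object involved in $T$ gives the claim. I expect the write-up to be short; the one subtlety worth stating explicitly is the confinement of decrements on each owned object to a single instance (from the \textsf{assign}-based partition in Section~\ref{sec:partitionAlgorithm}), since that is what turns ``commuting maps'' into ``order-independent outcome including the conditions.''
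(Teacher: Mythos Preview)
Your core argument matches the paper's: owned objects admit only increments and decrements, so the final value of each $o$ is its initial value plus the net sum of the $\pm\Delta$'s contributed by $T$, and this depends only on the multiset of operations, not the execution order. The paper simply writes this as an explicit formula and concludes. Where you diverge is the detour through condition checks, Lemma~\ref{lemma:exeagree}, and the single-instance confinement of decrements. That machinery is unnecessary here: the lemma's hypothesis already grants that both replicas \emph{successfully} executed all of $T$; you are not asked to show that a successful order exists at each replica, only that, given success at both, the final values agree. Commutativity of the $\pm\Delta$ maps settles that directly---intermediate states may well differ between the replicas, but the lemma concerns only the final state. Invoking Lemma~\ref{lemma:exeagree} inside this proof is also structurally backwards, since in the paper that lemma is what supplies the hypothesis of the present one (in the safety theorem), not something used within it. Your argument is not wrong, just longer than it needs to be.
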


\begin{proof}
If two honest replicas \( r_1 \) and \( r_2 \) successfully execute the same set of transactions \(T\).
Let \(O = \{o_1, o_2, \ldots, o_n\}\) be the set of all owned objects involved in $T$,
each transaction \(tx_i \in T\) has the form:
   \[
   tx_i = \{(o_{i_1}, o_{j_1}, \Delta_{1}), (o_{i_2}, o_{j_2}, \Delta_{2}), \ldots, (o_{i_k}, o_{j_k}, \Delta_{k})\},
   \]
where each tuple \((o_{i_p}, o_{j_p}, \Delta_p)\) denotes a decremental operation on object \(o_{i_p}\) and an incremental operation on object \(o_{j_p}\) by the same amount \(\Delta_p\).
The final value of any object \(o \in O\) after applying all transactions in \(T\) is:
  \[
value(o) = value_0(o) + \underbrace{\sum_{\substack{tx_i \in T, \\ (o, o_j, \Delta) \in tx_i}} \Delta}_{o~is~payee} - \underbrace{\sum_{\substack{tx_i \in T, \\ (o_i, o, \Delta) \in tx_i}} \Delta}_{o~is~payer}.
\]
Here \(value_0(o)\) is the initial value of $o$,  the first sum accumulates all increments to \(o\), and the second sum accumulates all decrements from \(o\). 
Therefore, for any permutation \( L \) of \( T \) that ensures the successful execution of all transactions, executing in the order specified by \( L \) will result in the same final values for all objects in \( O \). 
\end{proof}

\begin{lemma}\label{lemma:sharedvalue}
If two honest replicas successfully execute the same set of transactions \( T \), they must have consistent final values for all shared objects involved in \( T \).
\end{lemma}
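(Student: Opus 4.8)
The plan is to mirror the structure used for Lemma~\ref{lemma:ownedvalue}, but account for the fact that shared objects are touched only by contract transactions, which in \sysname are executed sequentially under a single global order. First I would observe that any shared object $o$ appearing in the common set $T$ is, by the data model, modified only through contract transactions (payment transactions involve only owned objects). By Lemma~\ref{lemma:exeagree} and the agreement property of the SB protocol, $r_1$ and $r_2$ have delivered identical partial logs up to the state $S$, and therefore the \texttt{globalOrder} routine (the deterministic dynamic global-ordering algorithm of Ladon, referenced in Appendix~\ref{appen:globalorder}) produces the \emph{same} global log $glog$ on both replicas. The key step is then to argue that the \emph{subsequence of $glog$ actually executed} is identical on both replicas: the algorithm executes a contract transaction only at its last occurrence in $glog$ (removing earlier duplicates), and the success/abort decision is a deterministic function of (i) the fixed global state at that position and (ii) the escrow log, whose contents are themselves determined by the common delivered logs. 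Hence both replicas execute exactly the same ordered sequence $L$ of contract transactions touching $o$.

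Next I would close the argument by induction on the position in $L$. The induction hypothesis is that after executing the first $t$ contract transactions of $L$, the two replicas agree on $value(o)$ for every shared object $o$ involved in $T$. The base case is the common initial state. For the inductive step, the $(t{+}1)$-st transaction applies a deterministic operation $op$ to $o$ (an assignment, or some contract-specified state change) whose result depends only on the pre-state of $o$ (and possibly of other shared/owned objects read by the transaction), all of which agree by the hypothesis and by Lemma~\ref{lemma:ownedvalue}; determinism of $\mathsf{apply}$ then forces the post-states to agree. Because $L$ is the same finite sequence on both replicas and they start from the same state and apply the same deterministic steps, the final values coincide. I would state explicitly that this is where the contrast with owned objects matters: owned-object values are permutation-independent (Lemma~\ref{lemma:ownedvalue}), whereas shared-object values are \emph{not}, so consistency here relies essentially on the uniqueness of $glog$ rather than on commutativity.

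I expect the main obstacle to be pinning down precisely that the executed subsequence of $glog$ is the same at both replicas. Two subtleties need care: first, that a contract transaction may be aborted (via \texttt{abortEscrow}/\texttt{abort}) and thus removed from \emph{all} occurrences in $glog$, so I must check that the abort decision is reached at the same position on both replicas — this follows because the decision predicate $\mathsf{exe}(tx)=\mathsf{true}\wedge\mathsf{allEscrowed}(tx)=\mathsf{true}$ reads only state that both replicas agree on by the time that position becomes the first pending one; second, that escrowed contract payments (Solution-II) do not desynchronize the two replicas — but escrows are appended to $elog$ as a deterministic consequence of delivering the same blocks, so $elog$ contents agree. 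Once these two points are established, the inductive value-equality argument is routine, and the lemma follows.
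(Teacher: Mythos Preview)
Your proposal is correct and follows the same core approach as the paper: the contract transactions in $T$ are executed in the same global order on both replicas, so determinism of the operations (together with a common initial state) forces the shared-object values to coincide, argued by induction along that common sequence. The paper's own proof is considerably terser—it simply asserts that the execution order of the contract transactions in $T$ is the same for both replicas and then invokes determinism step by step—whereas you additionally justify why the executed subsequence of $glog$ agrees (handling duplicates, aborts, and $elog$ consistency), which is more careful but not a different strategy.
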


\begin{proof}
Since the execution order of contract transactions in \( T \) is the same for both replicas, we can consider the sequence \( tx_1, tx_2, \dots, tx_n \) representing this fixed order.
For each transaction \( tx_i \) in the sequence, all replicas apply \( tx_i \) at the same step in the execution.
Let the value of any shared object \( o \) be \( value_i(o) \) after transaction \( tx_i \) is executed on it.
Because \( tx_i \) is deterministic and is executed at the same step on all replicas, any change to \(value_i(o) \) will be identical across replicas.
Thus, the final values for all shared objects involved in \( T \) will be consistent across replicas \( r_1 \) and \( r_2 \), provided they started with consistent initial values.
\end{proof}





\begin{theorem}[Safety]    
If two honest replicas reach the same state \( S \), they must have consistent values for all objects. 
\end{theorem}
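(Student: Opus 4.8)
The plan is to assemble the Safety theorem directly from the three lemmas that have just been proved, since each lemma handles one piece of the claim. First I would observe that the theorem's hypothesis---two honest replicas $r_1$ and $r_2$ reaching the same state $S = (sn_0, sn_1, \ldots, sn_{m-1})$---is exactly the hypothesis of Lemma~\ref{lemma:exeagree}. Applying that lemma gives that $r_1$ and $r_2$ successfully execute the same set of transactions; call this set $T$. This is the key bridge: the rest of the argument only needs to show that executing the same $T$ forces agreement on every object value.

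Next I would split the set of objects involved in $T$ into owned objects and shared objects, which is exhaustive by the data model ($type \in \{\textit{owned}, \textit{shared}\}$). For the owned objects, I invoke Lemma~\ref{lemma:ownedvalue}: since $r_1$ and $r_2$ both successfully execute $T$, they have consistent final values for all owned objects in $T$. For the shared objects, I invoke Lemma~\ref{lemma:sharedvalue} to conclude the same for all shared objects in $T$. Objects not touched by any transaction in $T$ retain their value from state $S$, which both replicas already agree on by assumption (or, more carefully, by induction on the sequence of states reached, with $S$ treated as the base state both replicas share). Combining these cases yields that $r_1$ and $r_2$ have consistent values for \emph{all} objects, which is the conclusion.

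The main obstacle---and the place where I would spend the most care in the write-up---is making the reduction to ``the same set $T$'' fully rigorous rather than glossing it. Lemma~\ref{lemma:exeagree} as stated only asserts that the \emph{successfully executed} sets coincide; I need to be sure that ``consistent values for all objects'' really does follow only from the successfully-executed set and the shared starting point $S$, i.e. that aborted transactions leave no residue (the escrow mechanism's \textsf{abortEscrow} undoes all tentative updates, so this should hold, but it is worth a sentence). A secondary subtlety is the implicit claim that the two replicas agree on object values \emph{before} executing $T$; strictly this requires that ``reaching state $S$'' pins down a common prior state, which I would justify by an inductive argument over epochs/checkpoints or simply by taking $S$ to encode the full execution history so that the base case is immediate. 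Once those two points are nailed down, the proof is a three-line citation of the lemmas.

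\begin{proof}
Let $r_1$ and $r_2$ be two honest replicas that reach the same state $S = (sn_0, sn_1, \ldots, sn_{m-1})$. By Lemma~\ref{lemma:exeagree}, $r_1$ and $r_2$ successfully execute the same set of transactions; denote this set by $T$. By the escrow mechanism, any transaction not in $T$ is aborted and all its tentative updates are undone via $\mathsf{abortEscrow}$, so it contributes nothing to any object value. Hence the value of every object under state $S$ is determined by its value in the common base state together with the transactions in $T$. Every object involved in $T$ is either owned or shared. By Lemma~\ref{lemma:ownedvalue}, $r_1$ and $r_2$ have consistent final values for all owned objects involved in $T$; by Lemma~\ref{lemma:sharedvalue}, they have consistent final values for all shared objects involved in $T$. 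Objects not involved in any transaction of $T$ retain their common base-state values. Therefore $r_1$ and $r_2$ have consistent values for all objects.
\end{proof}
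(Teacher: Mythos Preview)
Your proposal is correct and follows essentially the same approach as the paper: invoke Lemma~\ref{lemma:exeagree} to obtain a common successfully-executed set $T$, then apply Lemma~\ref{lemma:ownedvalue} and Lemma~\ref{lemma:sharedvalue} to the owned and shared objects respectively. Your version is in fact more careful than the paper's own proof, which leaves the treatment of aborted transactions, objects untouched by $T$, and the common initial-value assumption implicit.
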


\begin{proof}
Given that \( r_1 \) and \( r_2 \) are in the same state \( S \) (\( S_{r_1} = S_{r_2} \)), Lemma~\ref{lemma:exeagree} ensures that both replicas successfully execute the same set of transactions.
By Lemma~\ref{lemma:ownedvalue} and Lemma~\ref{lemma:sharedvalue}, if the replicas have consistent initial values for all objects (owned and shared) involved in \( T \), they must have consistent final values for those objects after execution.
Thus, for all objects \( o \), \( value_{r_1}(o) = value_{r_2}(o) \).
\end{proof}

\begin{lemma}\label{eventually1}
For any transaction $tx$ sent by a correct client $c$, $tx$ is eventually broadcast in all SB instances it belongs to.
\end{lemma}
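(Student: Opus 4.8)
\noindent
The plan is to establish liveness of dissemination in three stages: (i)~some honest replica obtains $tx$ and inserts it into every bucket that determines $tx$'s instances; (ii)~$tx$ is never lost from those buckets; and (iii)~for each such instance, an honest leader eventually includes $tx$ in a block and triggers $\langle\mathsf{sb\text{-}broadcast}|b\rangle$. Since the property is a liveness one, all timing arguments will be made for times after \textsf{GST}, so that message delays are bounded by $\Delta$ and the failure detector is accurate.

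For stage~(i), I would invoke the censorship-resilience requirement from \secref{sec:orderAlgorithm}: a correct client sends $tx$ to at least $f+1$ replicas, and since at most $f$ of them are Byzantine, at least one honest replica $r_h$ receives $tx$. Because $c$ is correct, $tx$ is well-formed and carries a valid signature for each owned object it decrements, so $\mathsf{validateTx}(tx)$ returns true at $r_h$; following the partition routine of Algorithm~\ref{algorithm:main}, $r_h$ then computes $i=\mathsf{assign}(o)$ for every $o\in tx.O$ with $o.type=owned$ and $o.op=decremental$ and pushes $tx$ into $bucket_i$. These are exactly the instances $tx$ belongs to. For stage~(ii), I would use the fact that each bucket is append-only for a backup, so $tx$ persists in $bucket_i$ at $r_h$ until proposed; honest replicas moreover forward $tx$ (and leaders may request missing transactions), so the set of honest replicas holding $tx$ only grows.

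For stage~(iii), fix an instance $i$ that $tx$ belongs to. If the current leader of $i$ is honest and holds $tx$, then after at most one batching timeout it calls $\mathsf{pullValidTx}(bucket_i,b.S)$, and since a correct client's transaction remains valid under the referenced state (its funds are only escrowed, never silently removed, and $c$ does not over-spend), $tx$ is eventually among the oldest valid transactions pulled, so the leader triggers $\langle\mathsf{sb\text{-}broadcast}|b\rangle$ with $tx\in b.txs$. If instead the leader is Byzantine and censors $tx$, every honest replica holding $tx$ observes that $tx$ has not been proposed within the timeout, detects the leader failure, and requests a leader replacement; the view-change mechanism of SB then installs a new leader and restarts $instance_i$ from the agreed state. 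As at most $f$ replicas are Byzantine and leadership rotates across view changes and epochs, after finitely many replacements an honest replica holding $tx$ becomes leader of $i$, reducing to the first case. Applying this argument to every instance $tx$ belongs to yields the lemma.

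The main obstacle is stage~(iii): I must argue that the chain of view changes terminates with an honest, $tx$-holding leader. This rests on (a)~accuracy of SB's failure detector after \textsf{GST}, so that honest leaders are not perpetually suspected; (b)~the guarantee that a censoring leader is eventually replaced, which in turn uses the bounded-message-rate assumption to prevent a DoS from stalling the detector; and (c)~propagation of $tx$ to the incoming leader, handled by honest-replica forwarding and the ``request missing blocks/transactions'' step. A secondary subtlety is ensuring $\mathsf{pullValidTx}$ does not perpetually skip $tx$ as invalid; for a correct client this cannot persist, because escrowed funds are refunded on abort and $c$ issues no conflicting spends, so $tx$ stays (or again becomes) valid under some reachable state.
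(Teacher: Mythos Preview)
Your proof is correct and follows essentially the same approach as the paper: at least one honest replica receives $tx$ and buckets it, and for each relevant instance either the leader is honest and broadcasts $tx$, or the Byzantine leader is detected and replaced until an honest one does. The paper's argument is considerably terser and omits your stage~(ii) and the subtleties you flag about view-change termination and $\mathsf{pullValidTx}$, but the skeleton is identical.
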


\begin{proof}
For any transaction \( tx \) sent by a correct client \( c \), the transaction is sent to at least \( f + 1 \) replicas, ensuring that at least one honest replica receives it and adds it to the relevant buckets. For each instance it belongs to, if the leader is honest, it will broadcast \( tx \). If the leader is Byzantine, it will be detected and a new honest leader will be selected eventually. This new leader will broadcast \( tx \) from the bucket, ensuring that \( tx \) is eventually broadcast.
\end{proof}

\begin{lemma}[Atomicity]\label{lemma:atomicity}
For a payment transaction $tx$, either all its sub-transactions are executed successfully, or aborted.
\end{lemma}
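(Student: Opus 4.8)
\textbf{Proof proposal for Lemma~\ref{lemma:atomicity} (Atomicity).}

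The plan is to reason about the lifecycle of the escrow requests generated by the sub-transactions of a payment transaction $tx$, and to show that the \textsf{commitEscrow}/\textsf{abortEscrow} calls in Algorithm~\ref{algorithm:main} are triggered in an all-or-nothing fashion. First I would fix a payment transaction $tx$ and let its sub-transactions be the single-payer, single-payee pieces obtained by the split described in Section~\ref{sec:partial}; by the partition module each decremental (payer) object of $tx$ is assigned to exactly one instance, so $tx$ is placed in the partial logs $plog[i]$ of precisely the instances holding its payers. By Lemma~\ref{eventually1} and the termination/agreement properties of SB, every such instance eventually delivers the block containing $tx$, so $tx$ becomes the first pending transaction in each of those $plog[i]$ at every honest replica, and the execution clause for $plog$ in Algorithm~\ref{algorithm:main} runs for it.

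Next I would do a case split on what happens when that clause runs. In the clause, the replica calls $\mathsf{escrow}(o,tx)$ for each owned decremental object $o$ it is responsible for, then checks whether $(o,tx)\in elog$: if the escrow failed (the condition $value \ge o.con$ was violated, so nothing was appended to $elog$), the replica invokes $\mathsf{abortEscrow}(tx)$ and $\mathsf{abort}(tx)$, which by definition of $\mathsf{abortEscrow}$ undoes every escrow request of $tx$ currently in $elog$ and removes $tx$ from all partial logs and the global log. Conversely, if $\mathsf{allEscrowed}(tx)$ becomes true — meaning every owned decremental object of $tx$ across all instances has a matching entry in $elog$ — and $tx$ is a payment transaction, the replica calls $\mathsf{commitEscrow}(tx)$, making all the decrements permanent, and then applies the incremental operations to the payees. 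The key structural observation is that these are the only two terminating outcomes for $tx$ in $plog$, and each acts on \emph{all} of $tx$'s escrow entries at once, so either every payer's decrement is committed (and every payee's increment applied) or every payer's decrement is undone and $tx$ is purged everywhere; there is no intermediate state in which some sub-transactions commit and others abort.

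The main obstacle is arguing this across instances rather than within one: the escrow of one payer may succeed at the point where its own instance processes $tx$, while the escrow of another payer fails later in a different instance, so I must show that the earlier success is not "stranded." Here the argument is that $\mathsf{allEscrowed}(tx)$ guards the commit — a replica commits only after observing, through its local $elog$, that \emph{all} payers have escrowed — and that the failing instance's invocation of $\mathsf{abortEscrow}(tx)$ operates on $elog$ globally, rolling back the already-successful escrow; since all honest replicas deliver the same blocks under the same referenced states $S$ (agreement of SB, plus the fact that escrows are performed on $b.S$ or a valid successor), they all reach the same verdict for $tx$, so the decision is consistent and atomic replica-by-replica. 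I would close by noting that termination of this process follows from Lemma~\ref{eventually1} and the finiteness of the payer set: each payer's escrow is attempted exactly once per delivery, so within finitely many steps $tx$ is either committed at all its sub-transactions or aborted at all of them.
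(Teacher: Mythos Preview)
Your proposal is correct and follows essentially the same approach as the paper's proof: both invoke Lemma~\ref{eventually1} together with the termination property of SB to ensure every sub-transaction is eventually delivered, and then appeal to the all-or-nothing structure of the escrow mechanism (a failed escrow triggers $\mathsf{abortEscrow}$ on the whole transaction, while $\mathsf{allEscrowed}$ guards $\mathsf{commitEscrow}$) to conclude atomicity. Your write-up is considerably more detailed than the paper's\textemdash you unpack the $plog$ execution clause, trace the $elog$ entries explicitly, and discuss the cross-instance race between an early successful escrow and a later failing one\textemdash but the underlying argument and the lemmas it rests on are the same.
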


\begin{proof}
By the Lemma~\ref{eventually1} and the termination property of the SB protocol, all sub-transactions of $tx$ will eventually be delivered. Each sub-transaction is then processed using the escrow mechanism. According to this mechanism, if any sub-transaction fails during the escrow phase, all associated escrow requests of $tx$ are aborted, meaning that all sub-transactions will be aborted. Conversely, if all escrow requests are successfully committed, all sub-transactions are executed successfully. Thus, the atomicity of the transaction is preserved. 
\end{proof}

\begin{lemma}\label{eventually2}
For any transaction $tx$ broadcast by an honest replica, $tx$ is eventually confirmed by all the honest replicas.
\end{lemma}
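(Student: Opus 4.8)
The plan is to chain together the facts already established. Lemma~\ref{eventually1} guarantees that $tx$ (or each of its single-payer sub-transactions, if $tx$ involves several payers) is eventually $\mathsf{sb\text{-}broadcast}$ in every SB instance it is assigned to, and the termination property of the SB primitive then guarantees that each such instance eventually delivers $tx$ with some sequence number $sn$. At that point $tx$ is written into $plog[b.ins][b.sn]$ and, if it is a contract transaction, also passed to $\mathsf{globalOrder}$ and placed at some position of $glog$. So the remaining work is to show that, once a transaction sits at a fixed position in every log that matters to it, the execution loop of Algorithm~\ref{algorithm:main} eventually reaches it and either commits or aborts it; in both cases $tx$ becomes confirmed.

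The core of the argument will be an induction over log positions. I would first prove an auxiliary claim: for every instance $i$ and every index $k$, the entry $plog[i][k]$ is eventually filled (by SB termination, since the epoch structure assigns every sequence number of $i$ to some block) and the $\mathsf{firstPending}(plog[i])$ pointer eventually advances past $k$. The base case is the first entry; the inductive step assumes all entries with smaller index are confirmed and shows the transaction now at the head is handled. For a payment transaction this uses the $\mathsf{escrow}$ path, which in finitely many steps either aborts $tx$ (when some escrow fails) or --- once $\mathsf{allEscrowed}(tx)$ holds --- commits the escrows and applies the incremental operations; by Lemma~\ref{lemma:atomicity} the two outcomes are consistent across all sub-transactions, so $tx$ is confirmed in all of them simultaneously. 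For a contract transaction the head-of-$plog$ step only escrows funds and does not block, so $plog$ still drains; the actual confirmation happens in a symmetric induction over positions of $glog$, where the first-pending transaction is either removed (if it is not its last occurrence) or executed deterministically and then committed or aborted. Since $\mathsf{globalOrder}$ (the Ladon dynamic-ordering algorithm of Appendix~\ref{appen:globalorder}) eventually assigns $tx$ a stable global index and, by the agreement and termination of the underlying instances, no earlier $glog$ slot stays empty forever, the $\mathsf{firstPending}(glog)$ pointer reaches $tx$'s last occurrence in finite time.

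I would then assemble the pieces: $tx$ is delivered in all its instances (Lemma~\ref{eventually1} plus SB termination); by the $plog$ induction it is either aborted there or, if a payment transaction, committed; if it is a contract transaction it additionally reaches the head of $glog$ by the $glog$ induction and is executed there. In every branch $tx$ is executed (successfully or not) by each honest replica, which by definition means it is confirmed by all honest replicas. I would also note that the epoch does not stall: a replica ends an epoch only after confirming all sequence numbers assigned to it, and the argument above shows each such number is confirmed, so the checkpoint protocol of \secref{sec:checkpoint} lets the system move on.

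The main obstacle I anticipate is ruling out a deadlock in the escrow / first-pending machinery: a payment transaction with several payers only commits when \emph{all} its sub-transactions have been escrowed, so I must argue that the per-instance ``head of $plog$'' pointers cannot get stuck waiting on one another in a cycle. The key observation to make precise is that reaching the head of $plog[i]$ and performing $\mathsf{escrow}$ does \emph{not} depend on any other instance's progress --- the escrow is attempted against the state $b.S$ (or a valid successor), and either appends to $elog$ or triggers an immediate abort --- so every instance drains its $plog$ independently, and only the final $\mathsf{commitEscrow}$ waits on the (by then guaranteed) escrow records of the sibling sub-transactions. A secondary subtlety is making the $glog$ induction well-founded despite a transaction appearing in several slots and despite $\mathsf{globalOrder}$ possibly re-positioning blocks; here I would lean on the stabilization guarantees of the Ladon ordering algorithm to fix a finite final index for each delivered block.
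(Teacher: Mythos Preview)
Your plan is correct and follows the same skeleton as the paper: SB termination gives delivery, payment transactions are then confirmed via the escrow/atomicity path (the paper simply cites Lemma~\ref{lemma:atomicity}), and contract transactions are confirmed once the global-ordering algorithm reaches them. The paper's actual proof is much terser than what you outline---it does not spell out the inductions over $plog$/$glog$ positions or the escrow-deadlock analysis you propose---so your version amounts to a more detailed expansion of the same argument rather than a different route.
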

\begin{proof}
By the termination property of the SB protocol, a transaction broadcast by an honest replica will eventually be delivered by all honest replicas. 
For a payment transaction, all the sub-transactions will be executed or aborted after being delivered by the SB instances (Lemma~\ref{lemma:atomicity}). Thus, the payment transaction will be confirmed.
Contract transactions are processed by the global ordering algorithm after being delivered by the SB instances. This algorithm traverses all delivered transactions to ensure they are globally ordered and confirmed. Consequently, all honest replicas will eventually confirm all delivered contract transactions.
\end{proof}

\begin{theorem}[Liveness]
Given a transaction \( tx \) from a correct client \( c \), $tx$ is eventually confirmed by all honest replicas.
\end{theorem}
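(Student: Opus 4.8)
The plan is to chain the three liveness lemmas already established. First I would note that $tx$, coming from a correct client $c$, is by assumption sent to at least $f+1$ replicas, so at least one honest replica receives it, validates its format and signature, and pushes it (or, if $tx$ has several payers, each of its single-payer sub-transactions) into the bucket(s) determined by $\mathsf{assign}$. Invoking Lemma~\ref{eventually1}, for every instance to which $tx$ or a sub-transaction of it belongs, $tx$ is eventually broadcast in that SB instance: an honest leader will eventually pull it from the bucket, and if the current leader is Byzantine and censors $tx$, the failure detector triggers view changes until an honest leader is installed who does include it.

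Second, once $tx$ has been broadcast by an honest replica in each of its instances, I would apply Lemma~\ref{eventually2}: every transaction broadcast by an honest replica is eventually confirmed by all honest replicas. For a payment transaction this follows because the termination property of SB guarantees delivery of every sub-transaction, and the escrow-based atomicity result (Lemma~\ref{lemma:atomicity}) guarantees that all sub-transactions are then either committed together or aborted together, so $tx$ reaches a confirmed state (successful or unsuccessful). For a contract transaction it follows because, after SB delivery, the dynamic global ordering algorithm assigns it a finite global position, after which it is executed and confirmed. Combining the two steps yields the theorem.

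The main obstacle I expect is the no-starvation argument hidden inside the first step: I must argue that $tx$ does not languish forever near the back of its bucket while leaders keep pulling fresher transactions. This relies on (i) buckets being append-only FIFO lists from which leaders pull the \emph{oldest} transactions, so a transaction's rank only decreases, and (ii) the epoch structure of Algorithm~\ref{algorithm:main}, in which each honest leader must fill all sequence numbers assigned to it before an epoch can end, so sufficiently many blocks are eventually produced on each instance to drain any fixed prefix of the bucket. A secondary subtlety is synchronization across the (possibly several) instances handling $tx$'s sub-transactions: since after \textsf{GST} each instance makes progress independently, every sub-transaction is eventually delivered, and the shared escrow log $elog$ is the state that lets any honest replica detect once all sub-transactions have escrowed and then commit them, so no sub-transaction waits on another indefinitely. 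Once these fairness and cross-instance points are nailed down, the remainder is a direct appeal to Lemmas~\ref{eventually1}--\ref{eventually2}.
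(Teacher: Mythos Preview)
Your proposal is correct and follows essentially the same approach as the paper: the paper's proof is just two sentences that chain Lemma~\ref{eventually1} (tx is eventually broadcast in all its SB instances) and Lemma~\ref{eventually2} (a broadcast transaction is eventually confirmed), and you do exactly that, only with more unpacking of the lemmas' internals and an extra discussion of no-starvation and cross-instance synchronization that the paper leaves implicit.
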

\begin{proof} 
Assuming the transaction \( tx \) is sent by a correct client, by Lemma~\ref{eventually1}, \( tx \) will be broadcast in all SB instances it belongs to. Following this, according to Lemma~\ref{eventually2}, \( tx \) will be confirmed by all honest replicas.
\end{proof}

\section{Performance Evaluation} \label{sec:evaluation} 
In this section, we evaluate the performance of \sysname in different scenarios and compare \sysname against ISS~\cite{stathakopoulou2022state}, RCC~\cite{gupta2021rcc}, Mir-BFT~\cite{MIR-BFT}, DQBFT~\cite{dqbft}, and Ladon~\cite{Ladon2025}, the state-of-the-art Multi-BFT protocols. 
We implemented \sysname in Golang~\cite{golang}.
{ The evaluation results are illustrated using ChiPlot\footnote{https://www.chiplot.online/}. }
We use PBFT~\cite{castro2002practical} consensus protocol to implement SB instances.
With our experiments, we answer the following questions:
\begin{packeditemize}
    \item \textbf{Q1:} How does \sysname perform as compared with ISS, RCC, Mir-BFT, DQBFT, and Ladon with a varying number of replicas? (\secref{sec:expperformance1})   
    
    \item \textbf{Q2:}How does \sysname perform with different proportions of payment transactions? (\secref{sec:proportions})  

    \item \textbf{Q3:} What is the latency breakdown of \sysname as compared to ISS? (\secref{sec:breakdown})

    \item \textbf{Q4:} {How does \sysname perform under faults?  (\secref{sec:faults})}
\end{packeditemize}


\subsection{Implementation and Experimental Setup} \label{sec:expset}

\bheading{Implementation.}
\sysname\footnote{The source code is available at \url{https://github.com/Hanzheng2021/Orthrus/}.} is implemented in Golang and built on top of the ISS\footnote{The source code for ISS is available at \url{https://github.com/hyperledger-labs/mirbft/tree/research-iss}.}~\cite{stathakopoulou2022state} platform. Golang was chosen for its concurrency features and performance, and the ISS platform provides a solid foundation for its infrastructure. 
Our implementation is fully integrated with the ISS platform, ensuring compatibility and allowing for direct comparisons with other protocols developed on the same foundation.

\begin{figure}[t]
\vspace{-2mm}
    \centering
    \begin{subfigure}[t]{0.24\textwidth}
        \centering
        \includegraphics[width=\textwidth]{ 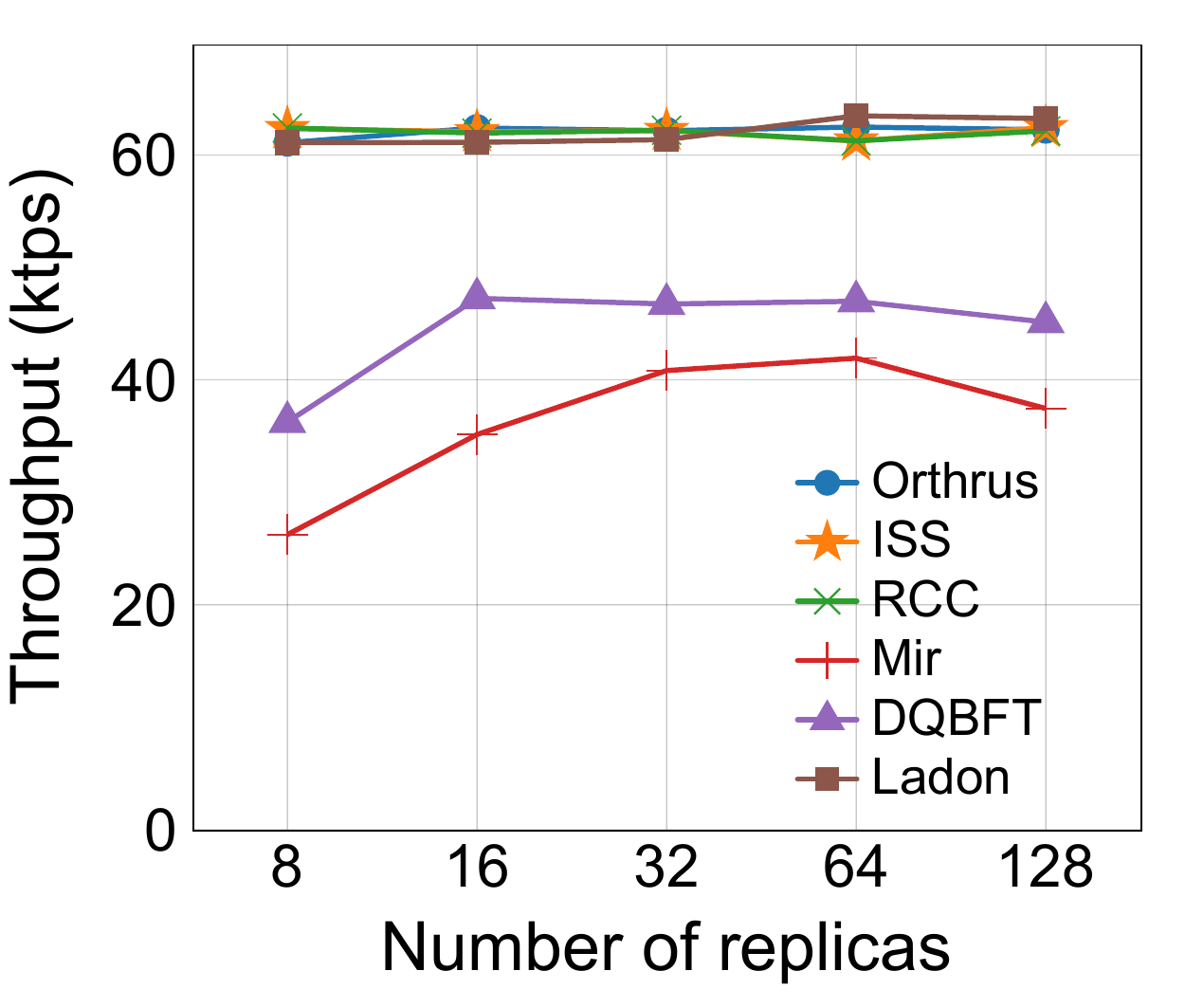}
        \caption{$\#$Stragglers = 0, WAN}
        \label{fig:wan1}
    \end{subfigure}
    \hfill
    \begin{subfigure}[t]{0.24\textwidth}
        \centering
        \includegraphics[width=\textwidth]{ 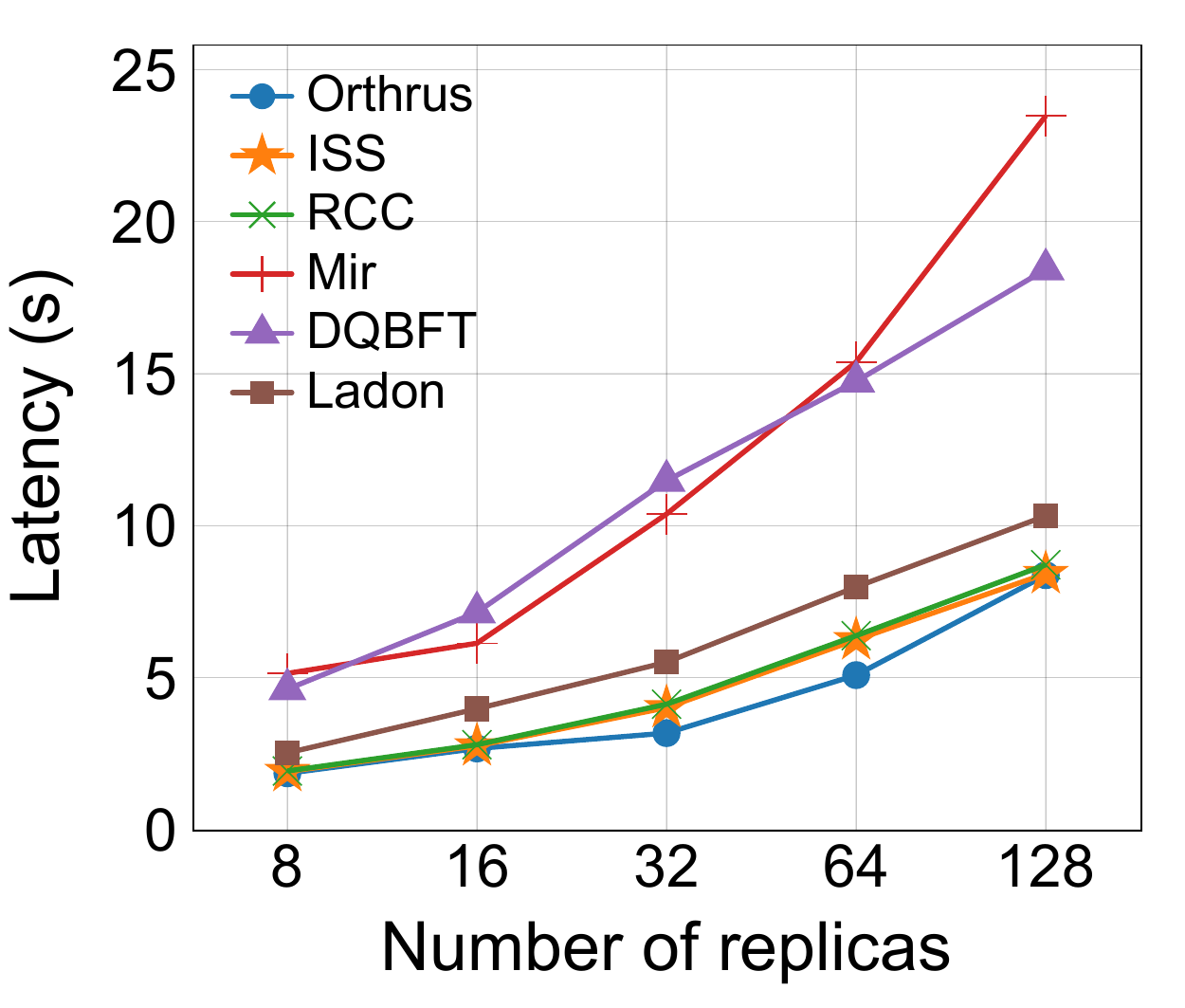}
        \caption{$\#$Stragglers = 0, WAN}
        \label{fig:wan3}
    \end{subfigure}
    \hfill\\
    \begin{subfigure}[t]{0.24\textwidth}
        \centering
        \includegraphics[width=\textwidth]{ 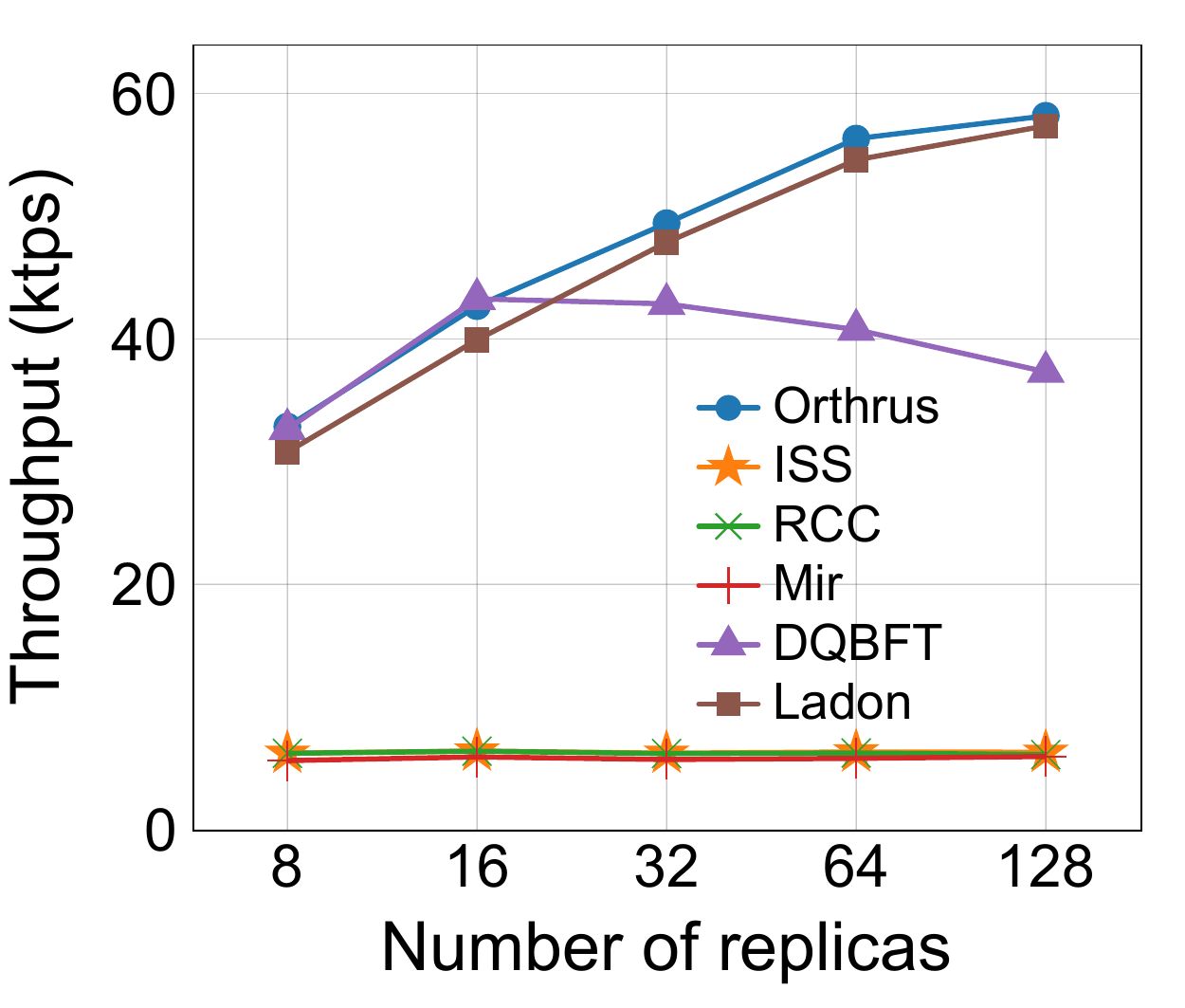}
        \caption{$\#$Straggler = 1, WAN}
        \label{fig:wan2}
    \end{subfigure}
    \hfill
    \begin{subfigure}[t]{0.24\textwidth}
        \centering
        \includegraphics[width=\textwidth]{ 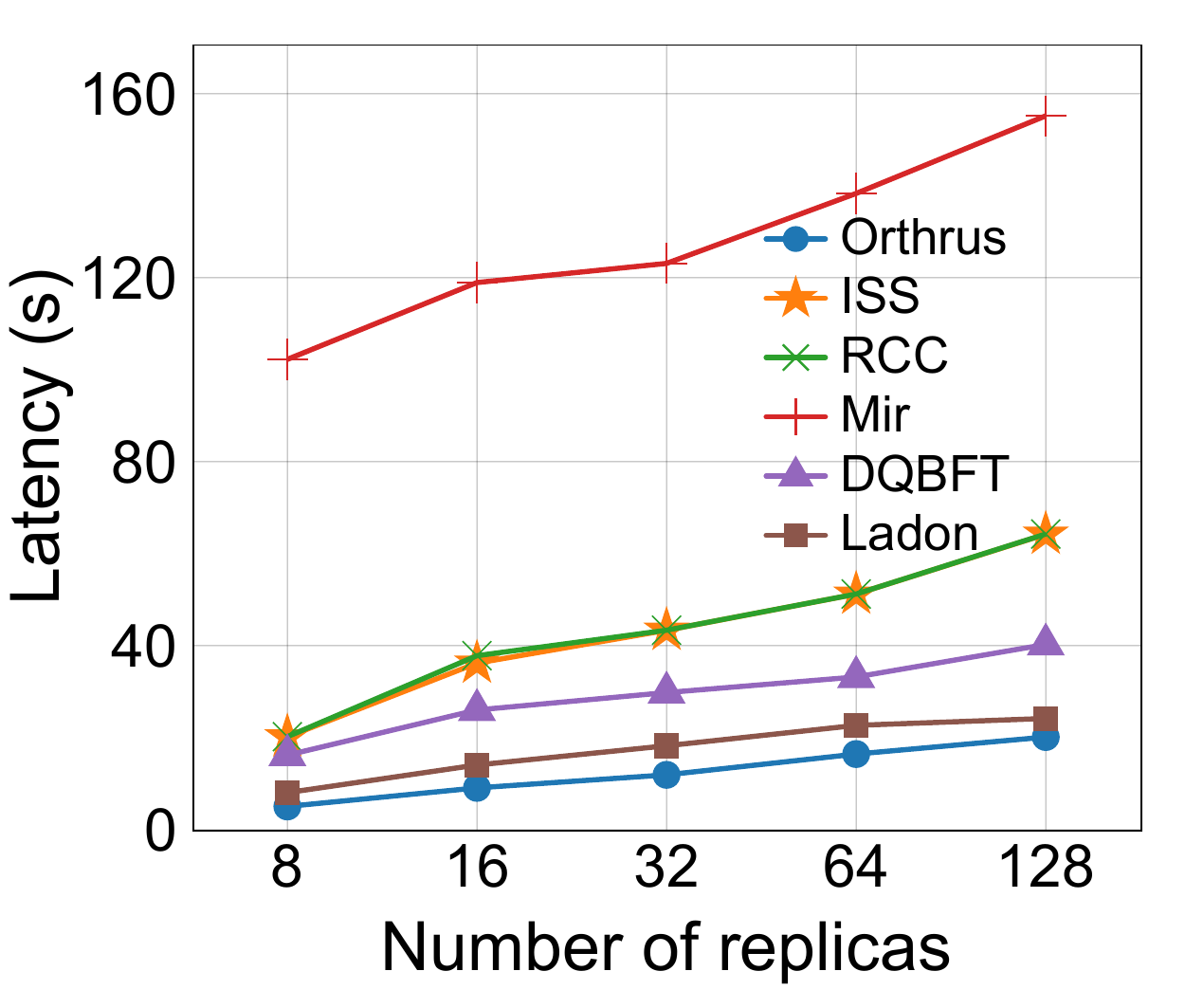}
        \caption{$\#$Straggler = 1, WAN}
        \label{fig:wan4}
    \end{subfigure}
    \hfill

    \caption{\textbf{Throughput and latency of \sysname, ISS, RCC, Mir, DQBFT, and Ladon  in WAN.}}
    \label{fig:performance}
\end{figure}

\bheading{Experimental setup.}
We deploy our protocols on AWS EC2 instances (c5a.2xlarge), with each instance representing a replica in our distributed system. Each instance is equipped with 8vCPUs, 16GB of RAM, and runs Ubuntu Linux 22.04. The experiments are conducted in both LAN and WAN environments. In the LAN setup, machines communicate over private network interfaces with a bandwidth of 1Gbps. For the WAN setup, instances are distributed across four Amazon cloud data centers located in France, the United States, Australia, and Tokyo, with network interfaces limited to 1Gbps. We utilize NTP for clock synchronization.

The dataset for our experiments is derived from the Ethereum blockchain, consisting of approximately 200,000 transactions extracted from blocks with heights ranging from 17,198,000 to 17,202,000. These transactions are drawn from a pool of 18,000 active accounts and include both payment and contract transactions, with payment transactions comprising 46\% of the transactions. This distribution is preserved in our experimental setup, using real transaction data as input. Once these transactions are processed, we reset the account states by introducing specific transactions and then re-execute the same set of 200,000 transactions to evaluate system performance under repeated workloads.

Each replica can function both as a leader for one instance and as a backup for others, i.e., $m = n$. To maximize throughput, we allow a large batch size of 4096 transactions, each carrying a payload of 500 bytes. We evaluate system performance under two different network conditions: with and without stragglers. In the straggler scenario, one instance operates at 10 times slower speed than the others. All experiments are repeated five times and the average results are reported.

\subsection{Throughput and Latency}\label{sec:expperformance1}
We evaluate the performance of \sysname, ISS, RCC, Mir, DQBFT, and Ladon without stragglers and with one straggler in both We evaluate two performance metrics: 1) throughput: the number of transactions {responded} to clients per second, and 2) latency: the average end-to-end delay from the moment clients submit transactions until they receive $f+1$ responses. We measure the peak throughput in kilo-transactions per second (ktps) before reaching saturation, along with the associated latency in seconds (s).

\begin{figure}[t]
\vspace{-4mm}
    \centering

    \begin{subfigure}[t]{0.24\textwidth}
        \centering
        \includegraphics[width=\textwidth]{ 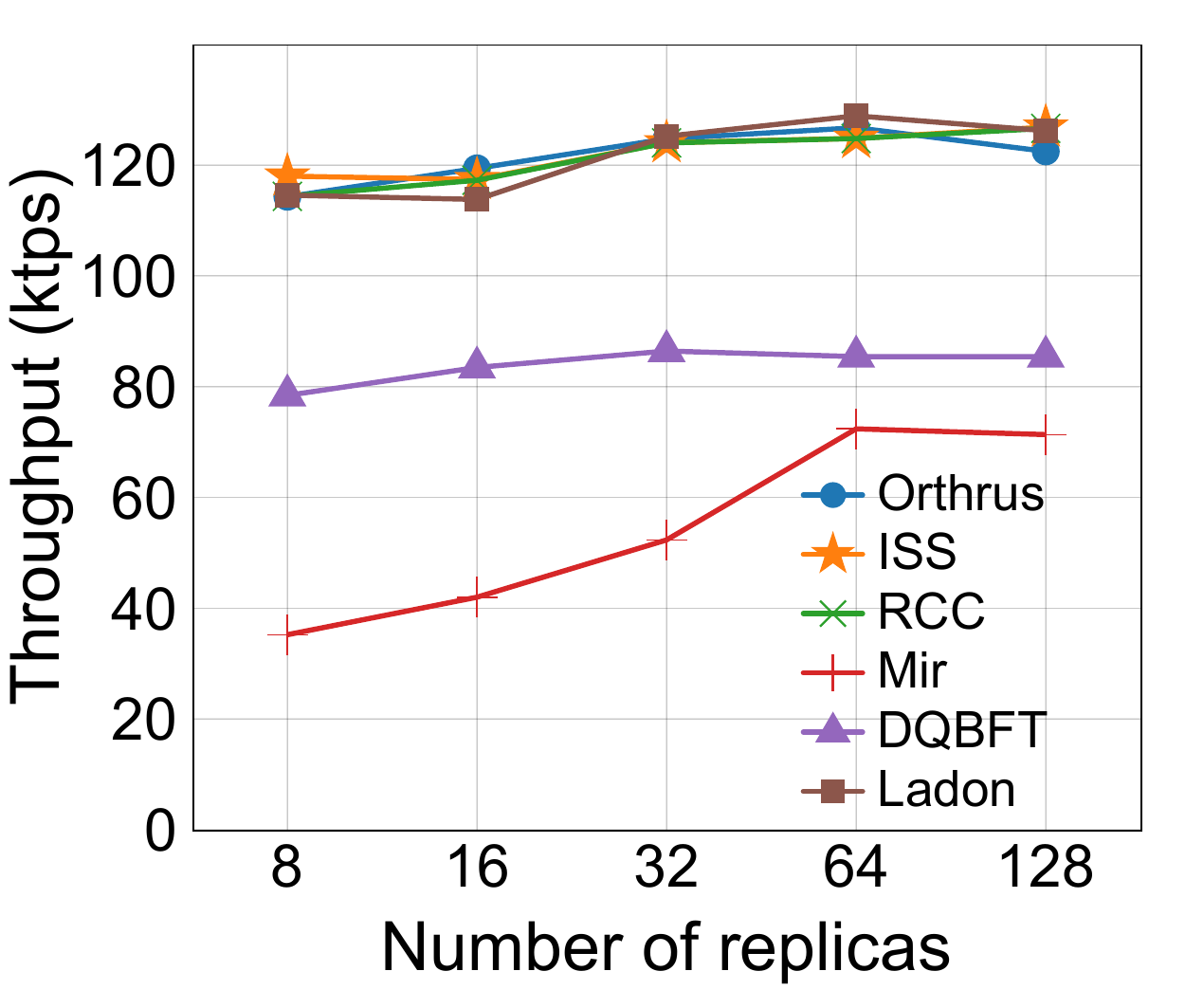}
        \caption{$\#$Stragglers = 0, LAN}
        \label{fig:lan1}
    \end{subfigure}
    \hfill
    \begin{subfigure}[t]{0.24\textwidth}
        \centering
        \includegraphics[width=\textwidth]{ 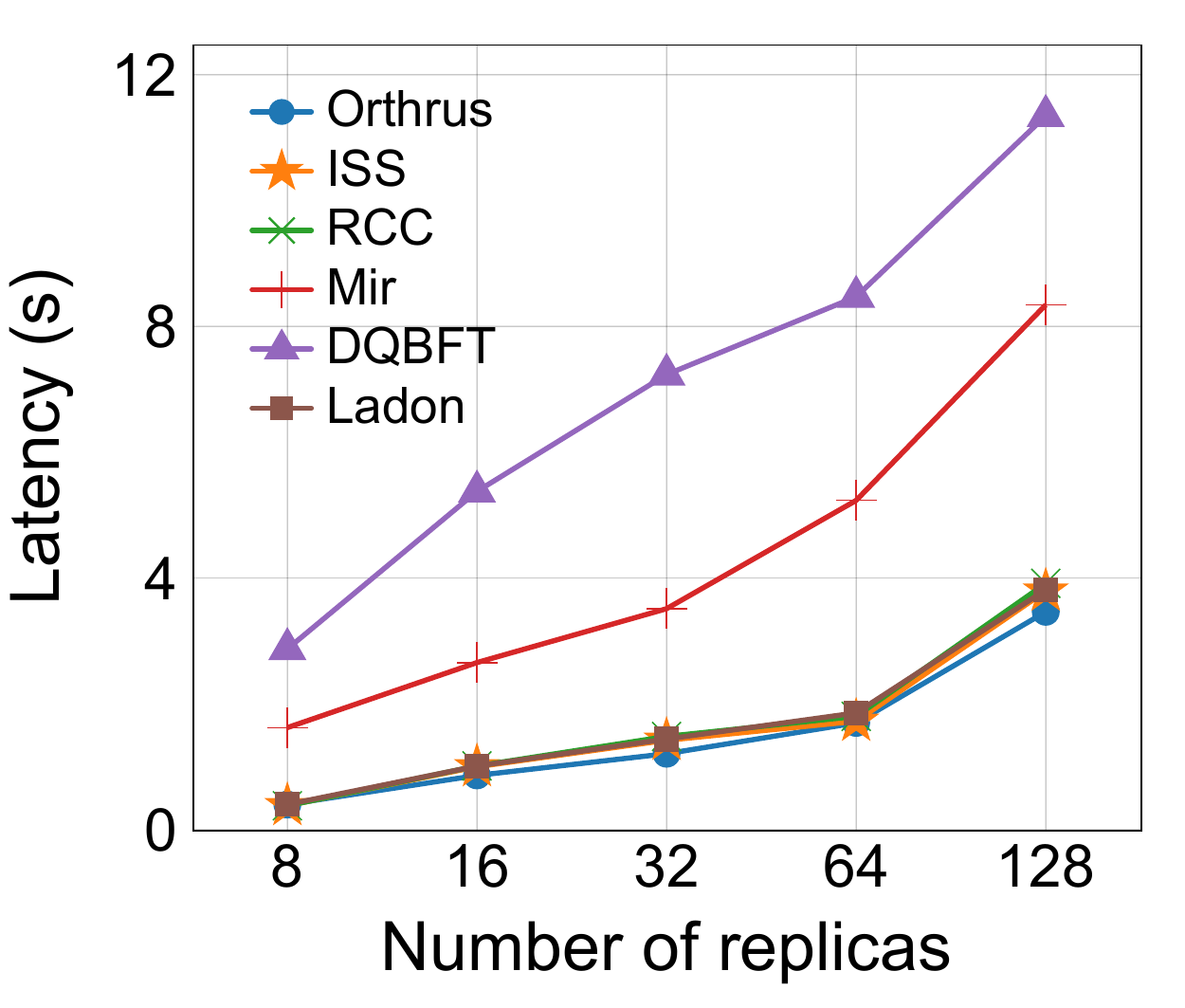}
        \caption{$\#$Stragglers = 0, LAN}
        \label{fig:lan3}
    \end{subfigure}
    \hfill\\
    \begin{subfigure}[t]{0.24\textwidth}
        \centering
        \includegraphics[width=\textwidth]{ 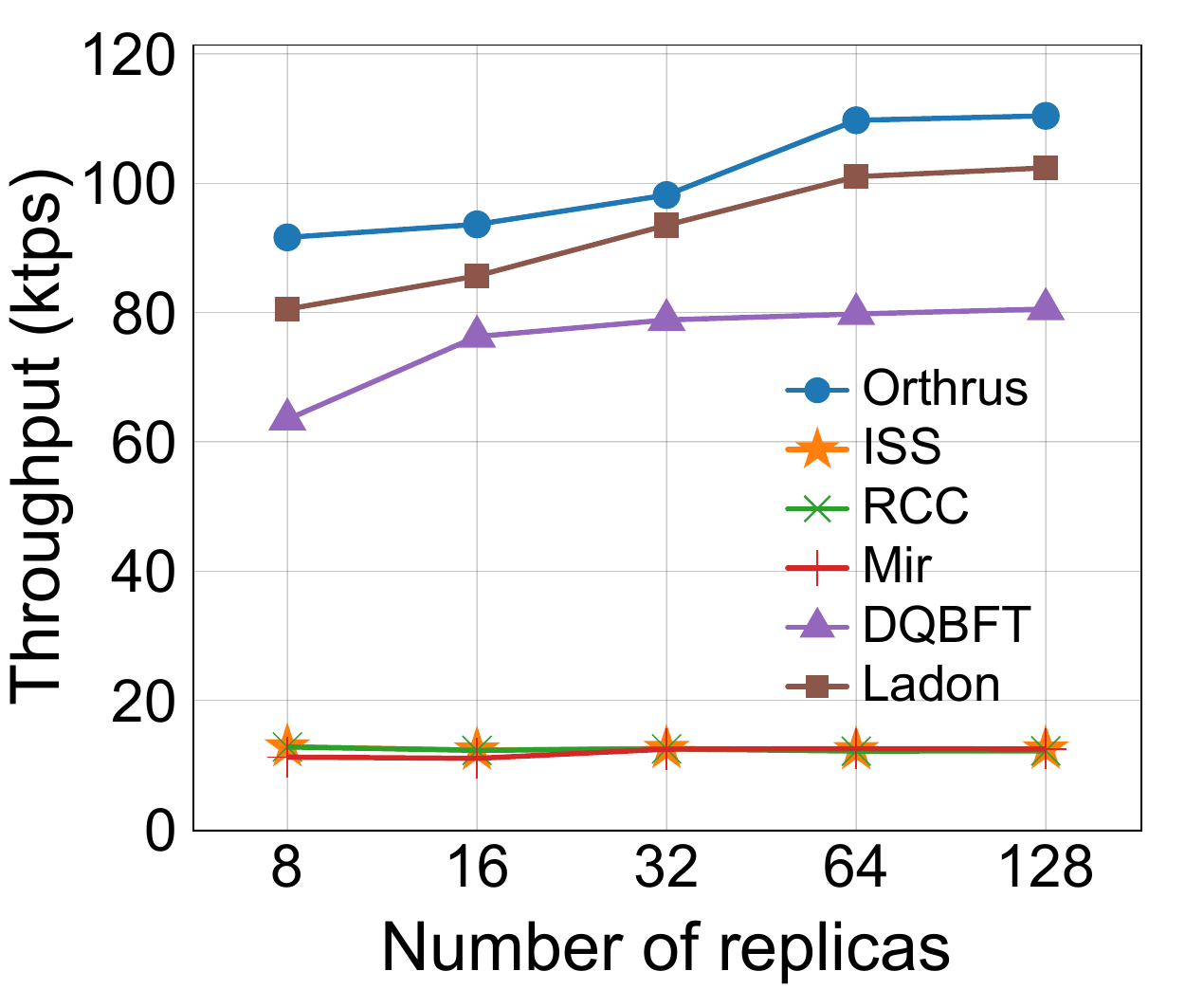}
        \caption{$\#$Straggler = 1, LAN}
        \label{fig:lan2}
    \end{subfigure}
    \hfill
    \begin{subfigure}[t]{0.24\textwidth}
        \centering
        \includegraphics[width=\textwidth]{ 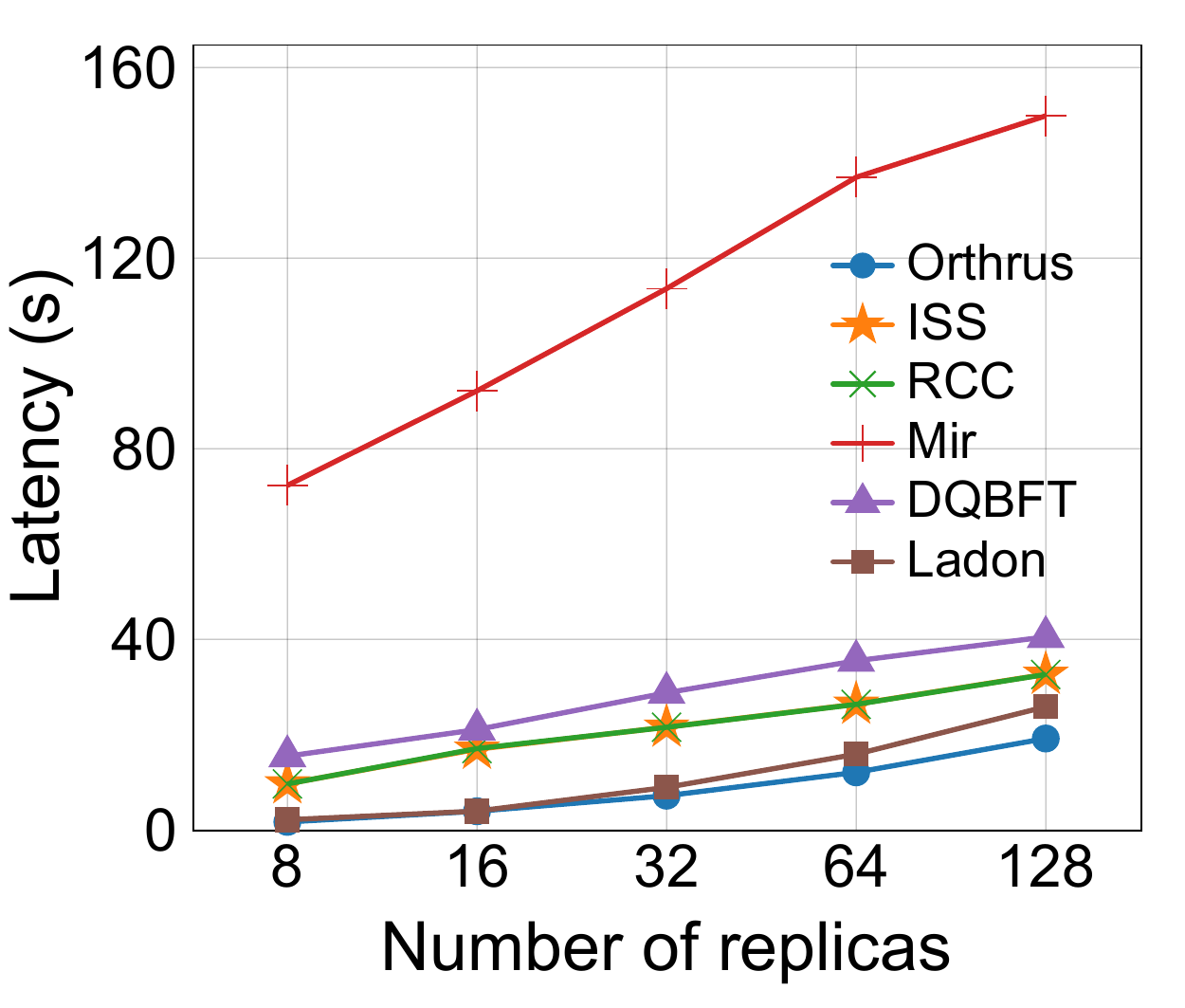}
        \caption{$\#$Straggler = 1, LAN}
        \label{fig:lan4}
    \end{subfigure}

    \caption{\textbf{Throughput and latency of \sysname, ISS, RCC, Mir, DQBFT, and Ladon {in} LAN.}}
    \label{fig:performancelan}
\end{figure}
\iheading{1) Performance in WAN.} 
\figref{fig:wan1} and \figref{fig:wan2} demonstrate that \sysname consistently maintains throughput within the top tier.
Without stragglers, \sysname shows very similar throughput with ISS and RCC.
{With one straggler, \sysname achieves comparable or superior throughput compared to dynamic global ordering Multi-BFT protocols.
Additionally, \sysname significantly surpasses pre-determined Multi-BFT protocols (\eg, ISS, RCC, Mir), delivering 9.5$\times$ higher throughput under the same conditions.} 
Comparing \figref{fig:wan1} and \figref{fig:wan2}, we observe that a straggler significantly impacts pre-determined Multi-BFT protocols, causing a drop of 89.9\% in throughput compared to those without stragglers on 128 replicas. \sysname and Ladon were less affected by stragglers,  experiencing only a $6.5\%$ and $9.2\%$ drop in throughput, respectively, compared to their performance without stragglers on 128 replicas.

\figref{fig:wan3} and \figref{fig:wan4} illustrate that \sysname consistently achieves the lowest latency compared to other protocols. Without stragglers, \sysname reduces the latency of 18.6\% of ISS and RCC with 64 replicas, 18.9\% of Ladon, 54.5\% and 64.3\% of DQBFT and Mir, on 128 replicas, respectively. With a straggler, the effect becomes more pronounced. On 128 replicas, \sysname reduces the latency of 68.6\% of ISS and RCC, 16.7\% of Ladon, 50.0\%, and 87.0\% of DQBFT and Mir, respectively.
By comparing \figref{fig:wan3} and \figref{fig:wan4}, we observe a similar impact of stragglers on latency, consistent with their effect on the throughput across different protocols.

\iheading{2) Performance in LAN.} \figref{fig:performancelan} shows that Protocols show similar trends in LAN with WAN, with higher throughput and lower latency. Without stragglers and with a straggler, \sysname always shows high throughput and low latency among all protocols. 
With one straggler, \sysname demonstrates approximately 8$\times$ higher throughput than ISS, RCC, and Mir, and 37.1\% and 7.9\% higher than DQBFT and Ladon, respectively, on 128 replicas. The latency is reduced by 16.7\%, and 50.0\% compared to Ladon and DQBFT, respectively, on 128 replicas.

The excellent performance of \sysname is mainly due to its fast confirmation mechanism for simple payment transactions, which allows some transactions to avoid global ordering and thus reduces latency. In the absence of stragglers, the delay introduced by global ordering is minimal, so \sysname shows only a slight advantage. However, in the presence of stragglers, the delay caused by global ordering is significant, giving \sysname a substantial advantage over pre-determined systems. 

\begin{figure}[t]
\vspace{-4mm}
	\centering
    \begin{subfigure}[t]{0.24\textwidth}
	\centering
        \includegraphics[width=\linewidth]{ 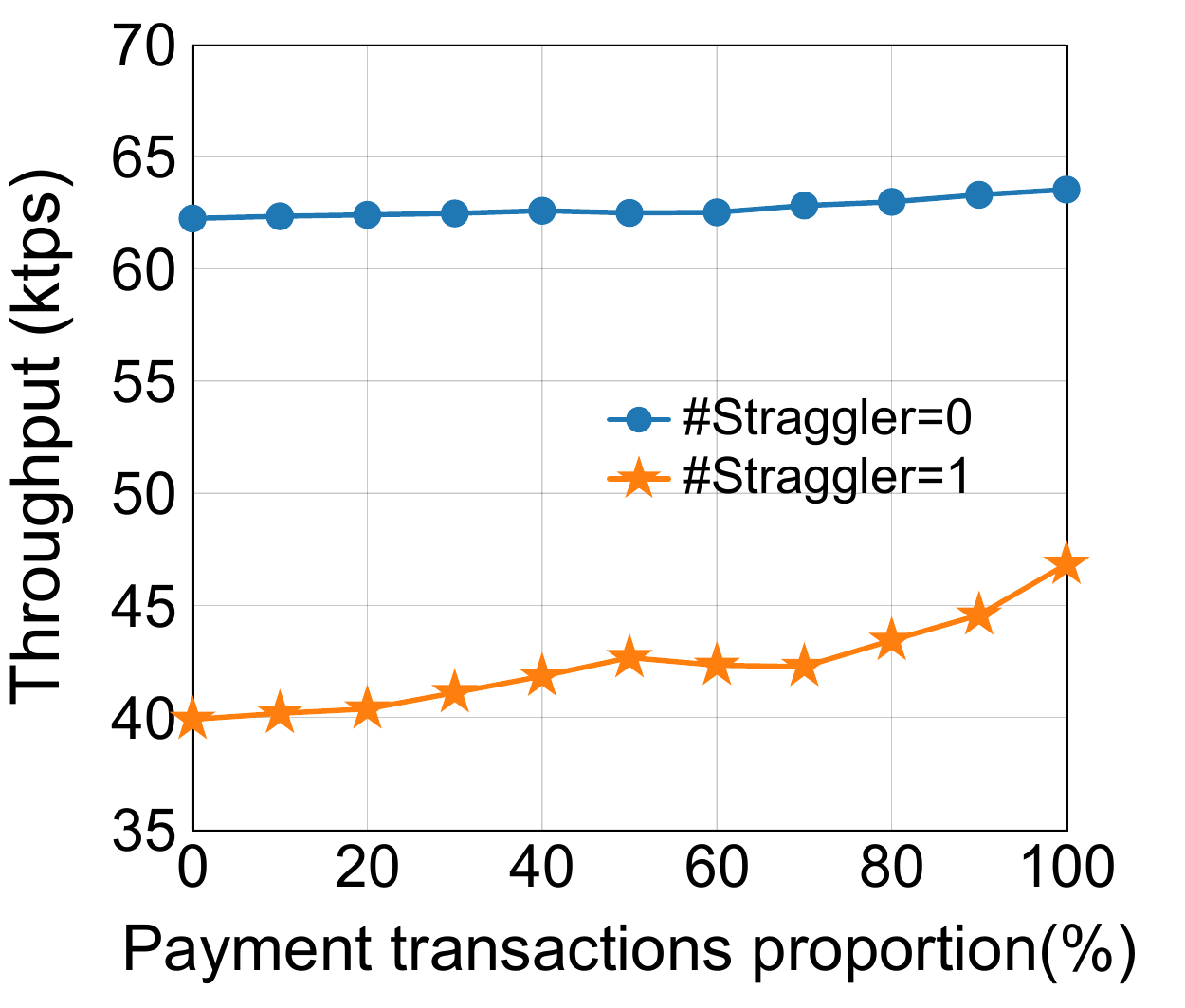}
        \caption{{Throughput}}
    \end{subfigure}
    \hfill
    \begin{subfigure}[t]{0.24\textwidth}
	\centering
        \includegraphics[width=\linewidth]{ 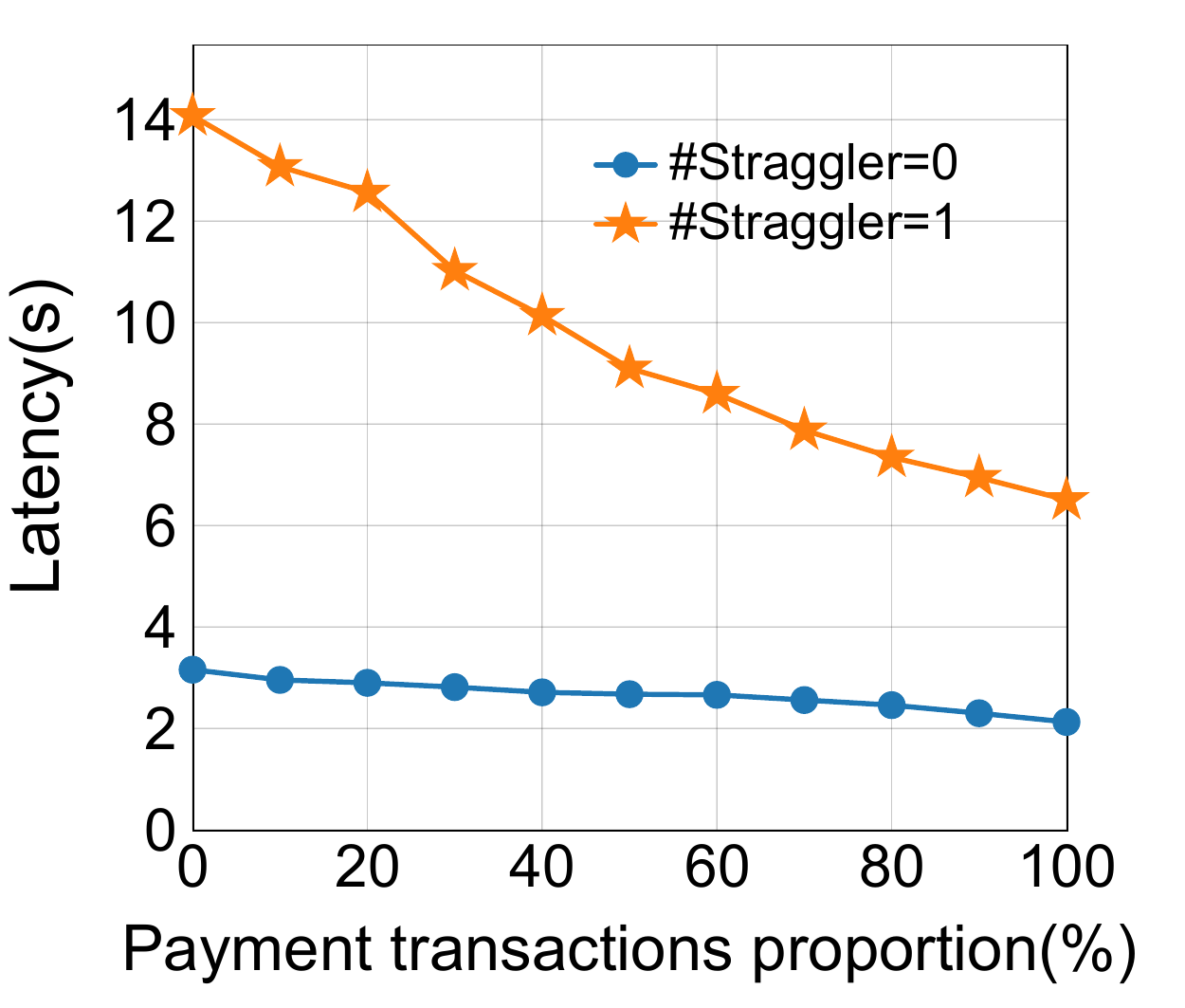}
        \caption{{Latency}}
    \end{subfigure}
    
    \caption{\textbf{Throughput and latency of \sysname under different payment transactions proportions in WAN.}}
	\label{fig:proportionTest}
\end{figure}

\subsection{Varying Payment Transactions Proportions}\label{sec:proportions}
\figref{fig:proportionTest} illustrates the performance of \sysname with varying proportions of payment transactions. The experiments are conducted on 16 replicas in WAN with/without a straggler. As depicted in \figref{fig:proportionTest}, the throughput increases and the latency decreases as the proportion of payment transactions rises. Specifically, without stragglers, when the proportion of payment transactions increases from 0\% to 100\%, the throughput increases by up to 2.1\% and the latency decreases by 32.7\%. However, with stragglers, the throughput sees a substantial increase of up to 17.3\%, while latency drops by 53.8\%. This pronounced change in the presence of stragglers is due to the higher proportion of payment transactions, allowing more transactions to be quickly confirmed, thereby greatly enhancing the overall processing efficiency.

\subsection{Latency Breakdown} \label{sec:breakdown}
We present a detailed latency breakdown of \sysname and ISS to understand the average spent time of different stages. We divide the process into five distinct stages: 1) Sending transactions: This stage spans the period from when the client sends the transaction until the replica receives it; 2) Pre-processing: This stage spans from receiving the transaction to broadcasting the transaction at the replica; 3) Partial Ordering: This stage extends from when the replica broadcasts a transaction to the delivery of the transaction; 4) Global Ordering: This begins with the delivery of the transaction and ends with the confirmation of the transaction;  and 5) Reply: This final stage covers when $f+1$ replicas confirm the transaction to when the client receives $f+1$ replies. 

The experiment was conducted in WAN with 16 replicas and one straggler. For each stage, the average time across all transactions is measured. 
The results are illustrated in \figref{fig:breakdown}. 
For the transaction sending and reply stages, the latency of \sysname and ISS are comparable, indicating similar communication infrastructure efficiencies. A similar trend is observed in the preprocessing and partial ordering stages.
One notable difference between the two protocols is evident in the global ordering stage, which exhibits a substantial delay in ISS, taking 34.5 seconds when a straggler is present and accounting for a remarkable 92.8\% of the total latency. 

This detailed breakdown demonstrates that the global ordering stage in ISS introduces a significant delay when stragglers are involved, becoming a bottleneck and exacerbating latency issues. In contrast, \sysname achieves significantly lower latency in the presence of stragglers, taking only 7.4 seconds for the global ordering. This lower latency is attributed to \sysname's efficient handling of transactions through its fast path mechanism, which minimizes the impact of slow global ordering. Additionally, the cost of global ordering can be further reduced as the proportion of payment transactions increases.

\begin{figure}[t]
        \includegraphics[width=0.5\textwidth]{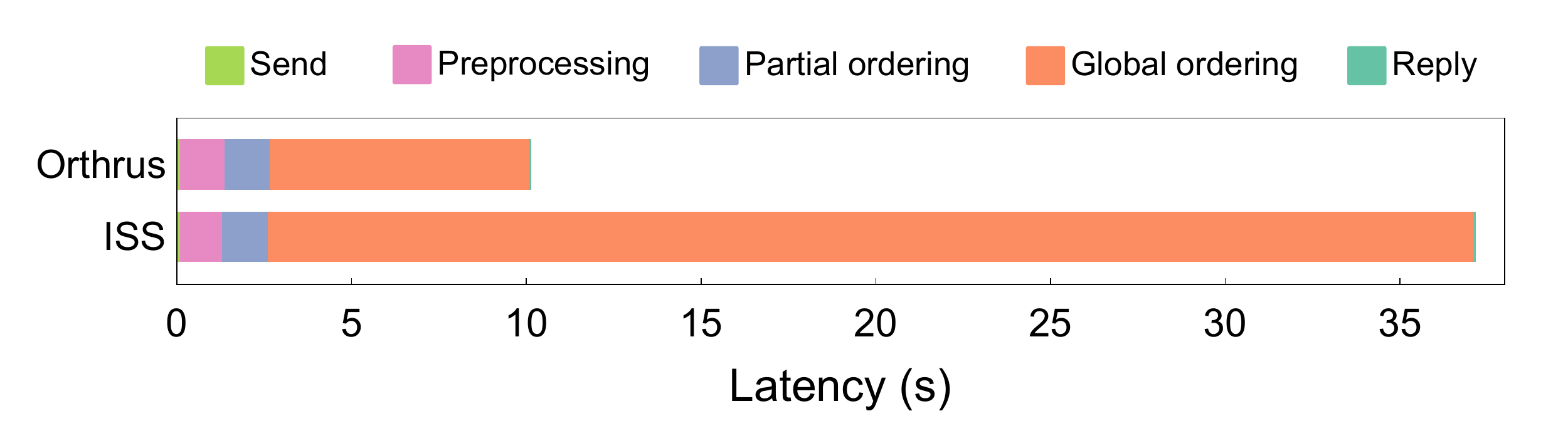}
    \caption{\textbf{Breakdown of latency in ISS versus \sysname.}}
    \label{fig:breakdown}
\end{figure}

\subsection{Performance Under Faults}\label{sec:faults}

In this section, we study the performance of \sysname under faults in a WAN of 16 replicas, allowing up to 5 faulty replicas. Our analysis covers both detectable faults and undetectable faults. We evaluate scenarios with 0, 1, and 5 faulty replicas. The PBFT view change timeout is set at $10$ seconds.

\begin{figure}[t]
\vspace{-4mm}
	\centering
    \begin{subfigure}[t]{0.5\textwidth}
	\centering
        \includegraphics[width=\linewidth, height=0.3\linewidth]{ 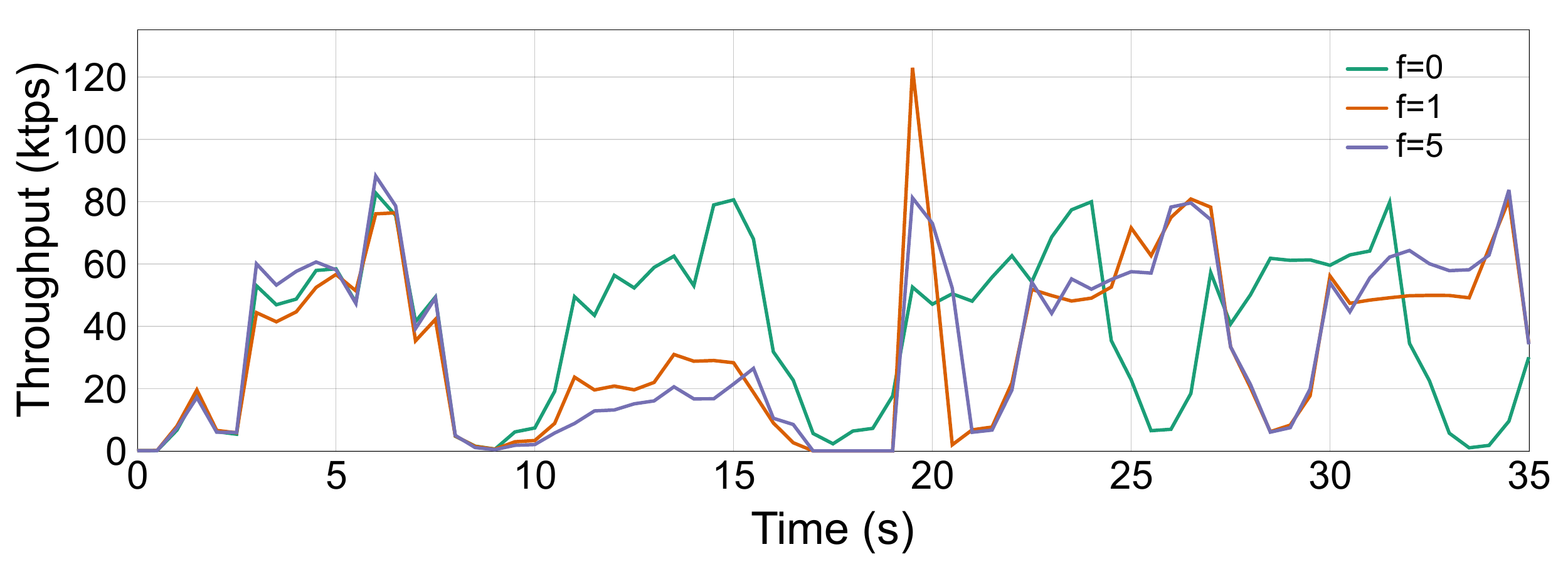}
        \caption{{Throughput average  (over 0.5s intervals) over time.}}
	\label{fig:detectablea}
    \end{subfigure}
    \hfill
    \begin{subfigure}[t]{0.5\textwidth}
	\centering
        \includegraphics[width=\linewidth, height=0.3\linewidth]{ 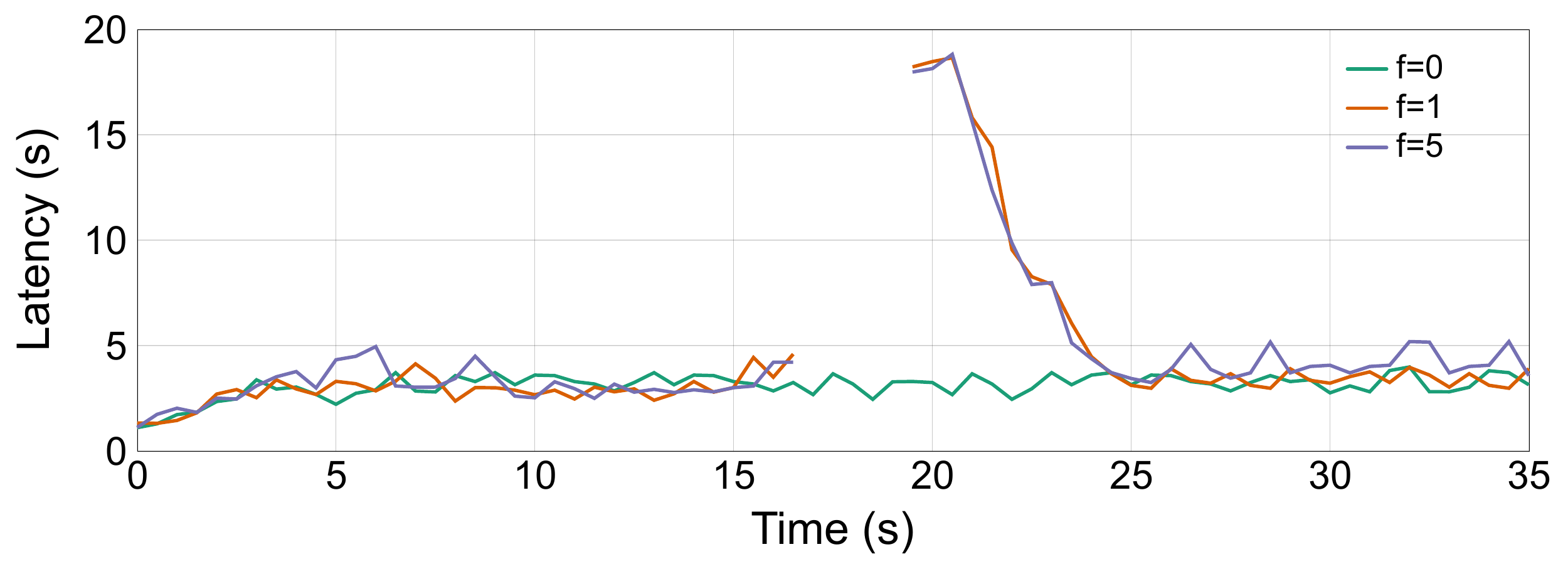}
        \caption{{Latency average  (over 0.5s intervals) over time.}}
        	\label{fig:detectableb}
    \end{subfigure}
    \caption{\textbf{{Throughput and latency average of \sysname over time with 0, 1, 5 faults. The faults occur at 9 seconds.}}}
	\label{fig:detectable}
\end{figure}

\bheading{Detectable faults.} We analyze how detectable faults impact the throughput and latency of \sysname.
\figref{fig:detectable} shows the throughput and latency average over time. In \figref{fig:detectablea}, the short drops to 0 in throughput correspond to the epoch change.
During the first epoch (0-9 seconds), no faults occur, and the system operates at its expected performance level. 

At the start of the second epoch (9 seconds), faults are introduced, causing a significant drop in throughput by more than 50\%. Specifically, when there is one faulty replica, the throughput drops by approximately 50\%, whereas, with 5 faulty replicas, the decrease is slightly more pronounced. 
This behavior is influenced by the transaction composition: 54\% of the transactions are contract-based and require global ordering, making them unconfirmed regardless of the number of faults.  In contrast, partially ordered payment transactions are only affected in instances led by faulty replicas, meaning their impact scales with the number of faults. During the decline in throughput, latency remains nearly unchanged. This is because transactions confirmed in this phase are from instances led by honest replicas, which are less affected by faulty replicas.

Upon fault detection, the system initiates the view change process to recover, completing at 19 seconds. Following this, throughput rises sharply due to the confirmation of transactions in the global log. In \figref{fig:detectableb}, latency exhibits a similar trend, as these confirmed transactions have been blocked for a significant duration, resulting in a high average latency. The brief interruption in the latency curve before 19 seconds corresponds to the period where throughput drops to zero in \figref{fig:detectablea}, indicating that no transactions were confirmed during this time.

By the beginning of the third epoch (22 seconds), throughput recovers as the system stabilizes,  and latency gradually stabilizes as well, demonstrating \sysname’s resilience to detectable faults.

\begin{figure}[t]
\vspace{-4mm}
	\centering
    \begin{subfigure}[t]{0.24\textwidth}
	\centering
        \includegraphics[width=\linewidth]{ 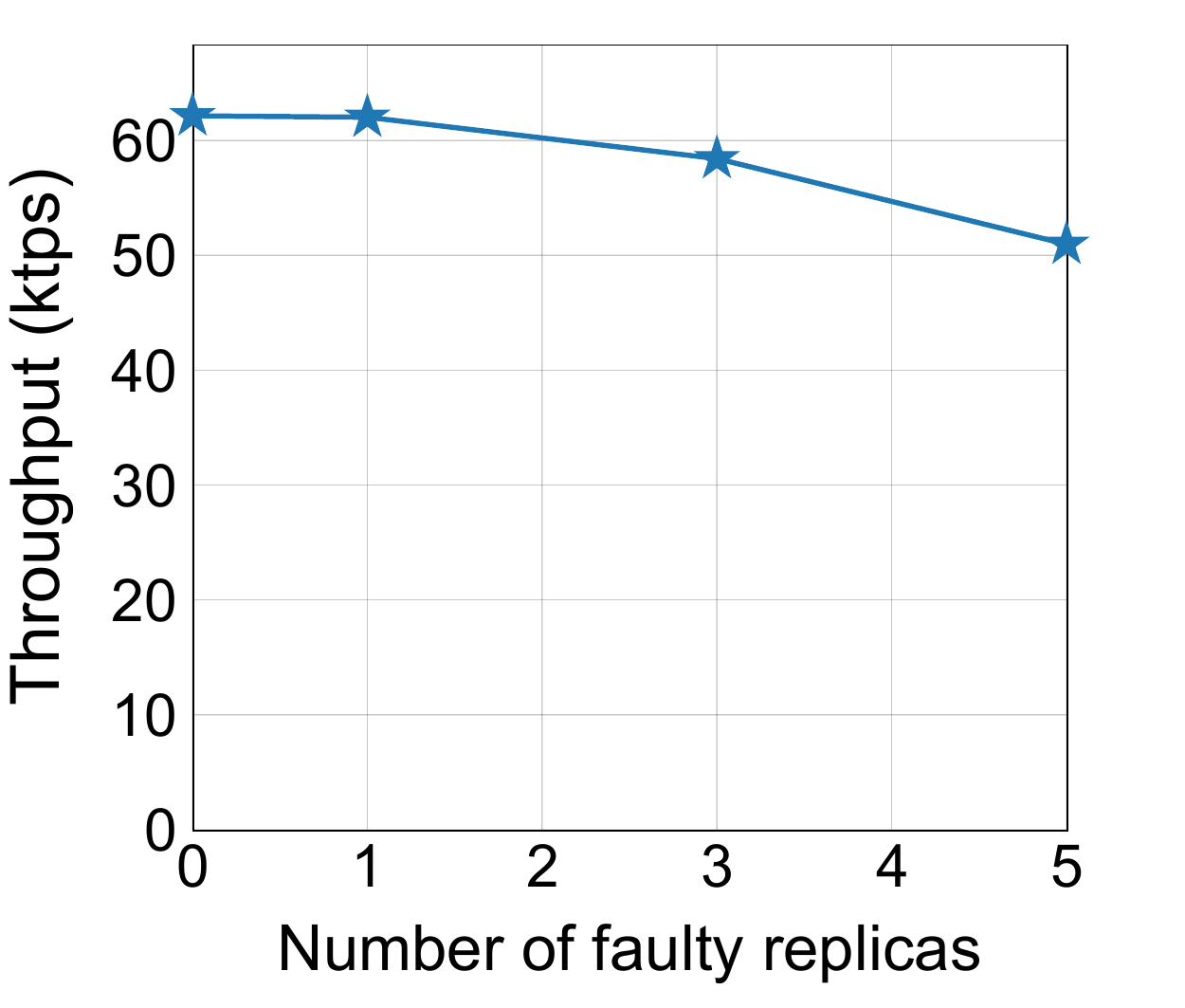}
        \caption{{Throughput}}
    \end{subfigure}
    \hfill
    \begin{subfigure}[t]{0.24\textwidth}
	\centering
        \includegraphics[width=\linewidth]{ 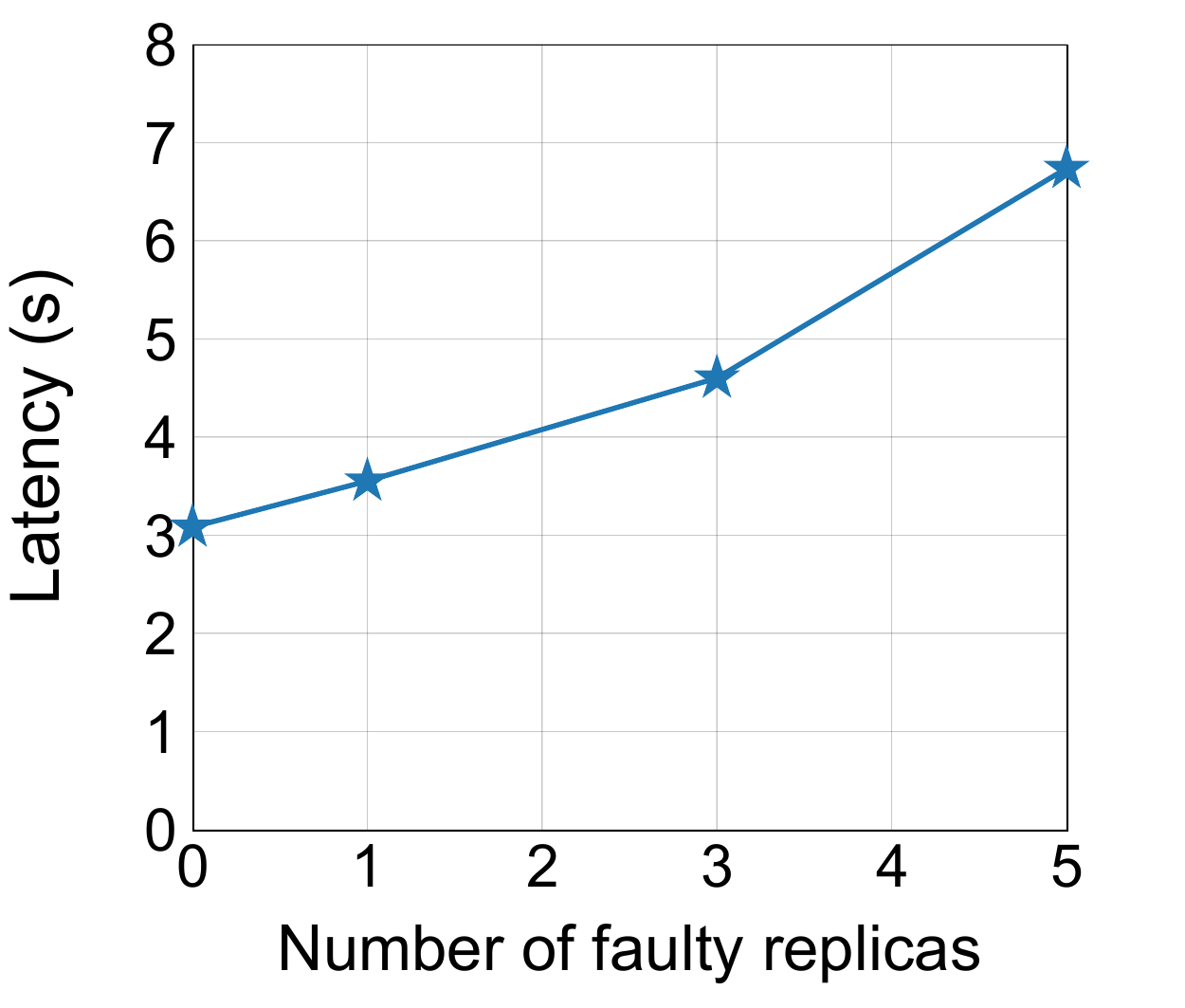}
        \caption{{Latency}}

    \end{subfigure}
    
    \caption{\textbf{{Throughput and latency of \sysname under different number of undetectable faulty replicas in WAN.}}}
	\label{fig:undetectable}
\end{figure}

\bheading{Undetectable faults.} In this section, we evaluate undetectable Byzantine behaviors, where a replica avoids participating in other instances while continuing to propose blocks in the instance it leads, thereby preventing a leader timeout. This behavior does not trigger a view change. \figref{fig:undetectable} presents the throughput and latency of \sysname under different numbers of faulty replicas in a 16-replica WAN setting. As the number of faulty replicas increases, throughput declines moderately, while latency is impacted more significantly. This is because when faults reach the maximum threshold, system latency is primarily limited by the slowest honest replica.

\section{Related Work} \label{sec:related} 
In this section, we introduce prior works of single-leader BFT consensus protocols and Multi-BFT consensus protocols. Due to space constraints, we refer readers to Appendix~\ref{appen:otherReleated} for works about the partial ordering of payment systems and other scaling approaches of BFT consensus.

\bheading{Single BFT consensus.} Castro and Liskov propose PBFT~\cite{pbft1999}, the first practical BFT protocol that 
inspires many subsequent protocols~\cite{kotla2007zyzzyva, Buchman2016TendermintBF, hotstuff, FireLedger, gueta2019sbft, liu2018scalable,lyu2023byzantine,jalalzai2023fast,jalalzai2021hermes,niu2020ebft} 
Notable examples include Zyzzyva~\cite{kotla2007zyzzyva}, Tendermint~\cite{Buchman2016TendermintBF}, and HotStuff~\cite{hotstuff}. 
These protocols adopt a leader-based scheme, which simplifies the procedure of reaching consensus on proposals, however, leads to a leader bottleneck limiting system scalability~\cite{alqahtani2021bottlenecks,charapko2021pigpaxos,gai2023scaling}.

\bheading{Multi-BFT consensus.}  
Multi-BFT consensus allows replicas to operate multiple consensus instances in parallel, effectively addressing the leader bottleneck of single BFT consensus~\cite{MIR-BFT, stathakopoulou2022state, gupta2021rcc, Ladon2025}. 
Mir-BFT~\cite{MIR-BFT} is among the first Multi-BFT protocols, which runs in epochs to globally order blocks delivered from instances by some pre-determined indices. 
An epoch change is triggered if any leader is faulty, making it vulnerable to Byzantine leaders. Later, ISS~\cite{stathakopoulou2022state} improved on Mir-BFT by permitting replicas to deliver no-op messages to prevent frequent epoch changes. 
RCC~\cite{gupta2021rcc} also adopts a pre-defined global ordering for delivered blocks. Thus, a straggler instance could delay the start of the next epoch for other replicas, leading to a slowdown of the entire system. 
DQBFT~\cite{dqbft} uses a unique BFT instance to globally order output transactions from other instances to promote efficiency. 
However, the reliance on a single ordering BFT instance can be susceptible to targeted attacks~\cite{estrada2006network}.
Ladon~\cite{Ladon2025} proposes monotonic ranks to dynamically order blocks across instances, mitigating the impact of slow replicas. These approaches reduce the latency caused by slow instances, but the global ordering phase still accounts for a significant proportion (almost half) of the system latency. 

{Unlike the above-mentioned protocols, TELL~\cite{tong2024tell} optimizes the execution layer rather than consensus layer. TELL employs a State Hash Table (SHT) to track read/write sets, enabling concurrent execution of transactions across instances and within blocks. TELL resolves conflicts through re-execution and merges instances at the epoch level, which reduces execution waiting latency but limits its overall improvement in end-to-end client latency. 
In contrast, \sysname tries to avoid cross-instance conflicts through a tailored transaction assignment strategy, eliminating the need for re-executing payment transactions to improve end-to-end latency. }

\section{Conclusion} \label{sec:conclusion}
In this paper, we introduced \sysname, a novel approach to enhance the performance of Multi-BFT protocols by sidestepping the limitations typically introduced by global ordering. 
Leveraging a hybrid design, \sysname integrates partial ordering for conflict-free transactions and global ordering for general non-commutative transactions. We also introduce a customized escrow mechanism to ensure transaction atomicity and seamless interaction between partial and global logs.  Our experiments show that in WAN of 128 replicas, \sysname decreases latency by at least 68\% as compared to pre-determined global ordering Multi-BFT protocols with one straggler. 

\normalem
\bibliographystyle{unsrt}
\bibliography{bib}

\appendices 

\section{Global Ordering Algorithm}\label{appen:globalorder}
In this section, we introduce the dynamic global ordering algorithm in Ladon~\cite{Ladon2025}, which is used in this paper. 
The dynamic global ordering algorithm requires two parameters to order blocks: the instance index and the rank. Given a set of delivered blocks with these parameters, honest replicas can independently execute the ordering algorithm to produce a consistent sequence of globally confirmed blocks without additional communication.

\begin{algorithm}[ht]
\caption{The Global Ordering Algorithm}
\label{globalalgorithm}
\begin{algorithmic}[1]  
\State \textbf{function} $\mathsf{globalOrder}(b,glog)$
\State \hspace{1.0em} $\mathsf{append}(b, \mathcal{W})$ \textcolor{purple}{//$\mathcal{W}$: blocks waiting to be confirmed}
\State \hspace{1.0em} $\mathcal{P'} \leftarrow \mathsf{getLastBlock}(\mathcal{P})$ \textcolor{purple}{//$\mathcal{P}$: delivered blocks}
\State \hspace{1.0em} $b^{*} \gets \mathsf{findLowestBlock}(\mathcal{P'})$ 
\State \hspace{1.0em} $bar = (b^{*}.rank + 1,b^{*}.index)$ \textcolor{purple}{//compute the bar}
\State \hspace{1.0em} $b{can} = \mathsf{findLowestBlock}(\mathcal{W})$ \textcolor{purple}{//find candidate block}
\State \hspace{1.0em} \textbf{while} $b_{can} \prec bar$ \textcolor{purple}{//$b_{can}$ has a lower index than $bar$}
\State \hspace{2.0em} $\mathsf{append}(b_{can}, glog)$ \textcolor{purple}{//globally confirm $b_{can}$}
\State \hspace{2.0em} $\mathcal{W} \gets \mathcal{W} \setminus b_{can}$ \textcolor{purple}{//update $\mathcal{W}$}
\State \hspace{2.0em} $b_{can} = \mathsf{findLowestBlock}(\mathcal{W})$ \textcolor{purple}{//find next $b_{can}$}
\State \hspace{1.0em} \textbf{end while}
\State \textbf{end function}

\Statex
\Statex \textcolor{purple}{//Return block with the lowest ordering index}
\State \textbf{function} ${\mathsf{findLowestBlock}}(\mathcal{V})$
\State  \hspace{1.0em} $b^{*} \gets $ first block in $\mathcal{V}$
\State \hspace{1.0em} \textbf{for} each $b\in \mathcal{V}$ \textbf{do}  
\State  \hspace{2.0em} \textbf{if} $b \prec b^{*}$
\State  \hspace{3.0em} $b^{*} \gets b$
\State \hspace{2.0em} \textbf{end if}
\State \hspace{1.0em} \textbf{end for}
\State \hspace{1.0em} \textbf{return} $b^{*}$
\State \textbf{end function}

\end{algorithmic}
\end{algorithm}

\bheading{Rank.} Before broadcasting a block, a leader first collects the highest ranks from at least $2f+1$ replicas. It then increments the highest observed rank by one and assigns this value to the proposed block. The leader then broadcasts the block along with its assigned rank. Once consensus is reached, no additional procedure is needed to ensure agreement since the rank is piggybacked with the block. 

The rank parameter has two key properties:

\begin{packeditemize}
    \item \textbf{Agreement:} All honest replicas have the same rank for a delivered block.

    \item \textbf{Monotonicity:} 
     If a block $b'$ is generated after an intra-instance (or a delivered inter-instance) block $b$, then the rank of $b'$ is larger than the rank of $b$.
\end{packeditemize}

A detailed description and formal proof can be found in \cite{Ladon2025}.

\bheading{Global ordering algorithm.}
Algorithm~\ref{globalalgorithm} shows the global ordering function running at a replica. 
The blocks are  globally ordered by increasing ranks and a tie-breaking to favor blocks with smaller instance indices. For example, given two blocks $b$ and $b^{\prime}$, block $b$ will be globally ordered before $b^{\prime}$, when $b.rank < b^{\prime}.rank$ or $b.rank = b^{\prime}.rank \wedge b.index < b^{\prime}.index$. 
For convenience, we use $b \prec b^{\prime}$ to denote that block $b$ has a lower global ordering index than block $b^{\prime}$.

When a block $b$ is delivered, the $\mathsf{globalOrder}$ function is triggered to append the block $b$ to $\mathcal{W}$, which is the set of blocks delivered by SB instances waiting to be globally ordered, \ie, blocks delivered by SB instances but have not been globally ordered. The replica then fetches the last delivered block from the set $\mathcal{P}$ of blocks delivered by SB instances, denoted by the set $\mathcal{P}^{\prime}$.
It then finds the block $b^{*}$ that has the lowest ordering index among the blocks in $\mathcal{P}^{\prime}$, \ie, $\forall b^{\prime} \in \mathcal{P}^{\prime} \mbox{ with } b^{\prime} \ne b^{*}: b^{*} \prec b^{\prime}$. 
Thereafter, \textit{bar} can be computed as a tuple of $(rank, index)$:  
\[bar := (rank, index) = (b^{*}.rank + 1, b^{*}.index).\]
The threshold $bar$ represents the lowest global ordering index that can be owned by subsequently generated blocks. The $bar$ is initialized with $(0,0)$. 

With $bar$ defined, the replica repetitively checks unconfirmed blocks and decides which blocks to confirm.  
Specifically, the replica finds the block  $b_{can} \in \mathcal{W}$ that has the lowest ordering index, which is referred to as the \textit{candidate} block.
If $b_{can}$ has a lower ordering index than $bar$, $b_{can}$ can be globally confirmed because all future blocks will have higher indices than $b_{can}$. 
In particular, $b_{can}$ will be appended to the $glog$ and removed from the set $\mathcal{W}$. 
The process repeats until no such $b_{can}$ can be found.

\color{black}

\section{Protocol Running Example}\label{appen:runexample}
To give readers a better understanding of how the protocol operates, we provide a running example of \sysname, as shown in \figref{fig:example}. Consider a system with two instances, Instance 0 and Instance 1, and three clients: Alice, Bob, and Carol, with initial balances of {\$4, \$0}, and \$0, respectively. Assume Alice is assigned to Instance 0 and Bob is assigned to Instance 1. Additionally, there is a smart contract that requires two clients to invoke it together, incurring a cost of \$1 per client.

\begin{figure}[t]
	\centering
    \includegraphics[width=1\linewidth]{ 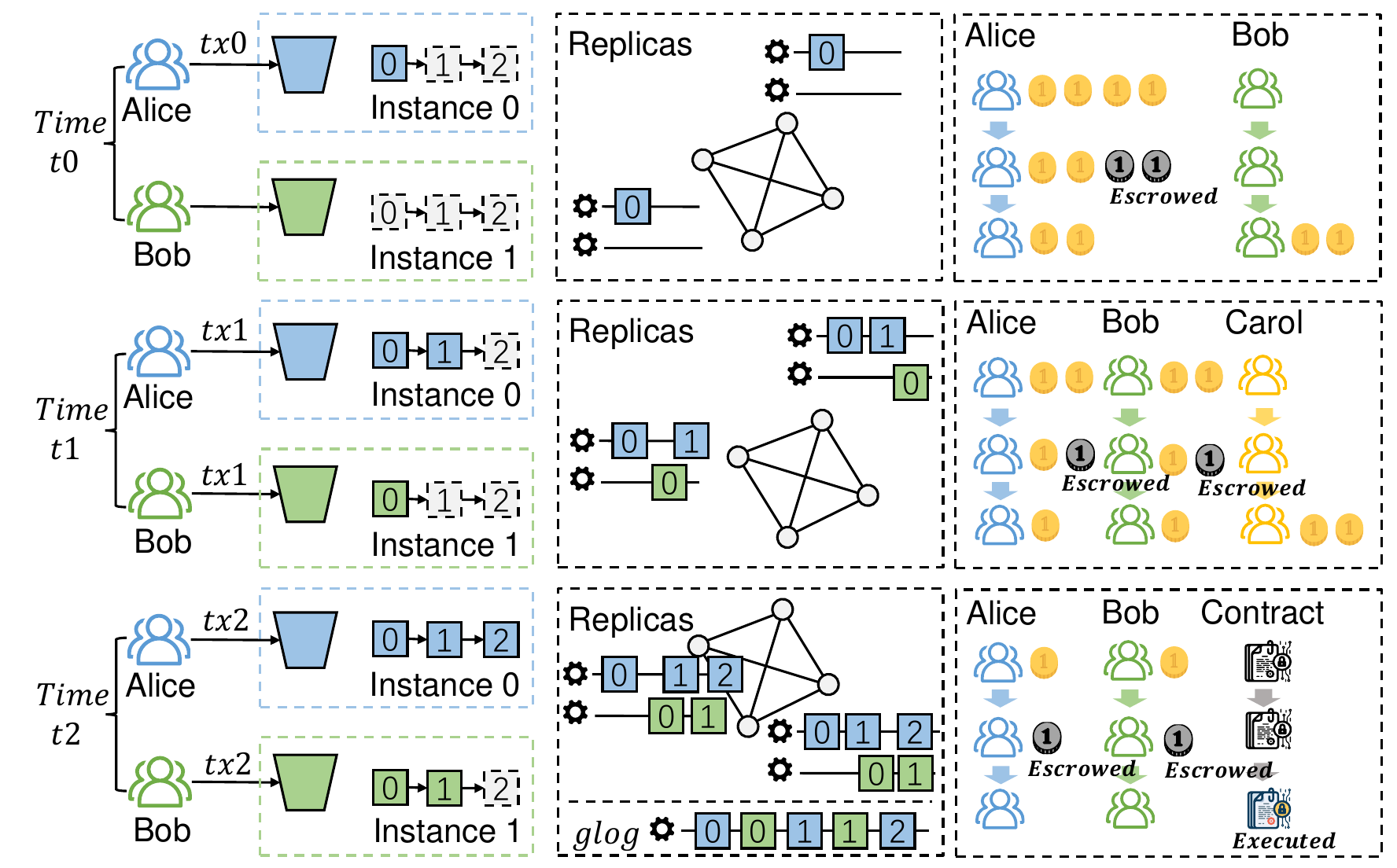}
	\caption{Protocol running example with two instances, three clients, and a contract. {$tx_i$ is a transaction arrived at time $t_i$.}}
	\label{fig:example}
\end{figure}

\bheading{Example of payment transactions.} 
{Consider a payment transaction \(tx_0\) with a single payer, Alice, and a single payee, Bob, where Alice transfers \$2 to Bob. The transaction involves two owned objects, Alice and Bob, and is assigned to Instance 0, as Alice performs a decremental operation of \$2. Assume \(tx_0\) is included in Block 0 of Instance 0. Since there is only one payer, \(tx_0\) is executed immediately, escrowing \$2 from Alice’s account and transferring it directly to Bob’s account.
}

{
Now consider another transaction \(tx_1\), where both Alice and Bob each transfer \$1 to a single payee, Carol. This transaction involves three owned objects: Alice, Bob, and Carol. Alice and Bob perform decremental operations of \$1 each, while Carol performs an incremental operation of \$2. As a result, \(tx_1\) is assigned to both Instance 0 (for Alice) and Instance 1 (for Bob).
Assume \(tx_1\) is included in Block 1 of Instance 0 and Block 0 of Instance 1. Block 0 of Instance 1 must refer to state \(S = \{0, \bot\}\), which represents Block 0 of Instance 0 (containing \(tx_0\)) and the initial state of Instance 1. This dependency ensures that Bob’s transfer relies on the completion of \(tx_0\), where Bob receives sufficient balance.
In Block 1 of Instance 0, \$1 is escrowed from Alice’s account. Concurrently, in Block 0 of Instance 1, \$1 is escrowed from Bob’s account. These operations can be processed in parallel because they belong to different instances and do not conflict. However, the escrow operation in Instance 1 must occur after \(tx_0\), as indicated by the reference to Block 0 of Instance 0.
Once both escrow operations are successfully processed, the replicas commit them, and \$2 is transferred to Carol’s account, completing the transaction \(tx_1\).
}

\bheading{Example of contract transactions.} Consider a contract transaction, \(tx_2\), where both Alice and Bob call the contract together. In this transaction, \(tx_2\) involves two owned objects: Alice and Bob. Alice and Bob each perform a decremental operation of \$1. Since \(tx_2\) involves both Alice and Bob, which are owned objects with a decremental operation, it is assigned to Instance 0 and Instance 1. 

Assume that \(tx_2\) is included in Block 2 of Instance 0 and Block 1 of Instance 1. 
In Block 2 of Instance 0, \$1 is escrowed from Alice’s account, while in Block 1 of Instance 1, \$1 is escrowed from Bob’s account. These two operations can be executed concurrently because they are payment operations (\ie, decremental operations on owned objects). 

Once a replica observes that both escrow requests have been successfully processed in their respective instances, it can execute other operations in the contract {after the previous 4 blocks in the $glog$} and commit both escrow operations. This ensures consistent execution results across different replicas, since the contract may contain shared objects that can not be confined to a certain instance.

\section{Other Related Works} \label{appen:otherReleated}
\bheading{Partial ordering of payment systems.}
CryptoConcurrency~\cite{tonkikh2023cryptoconcurrency} introduces dynamic detection of overspending based on the balance of the account, determining whether concurrent transactions can be satisfied without requiring consensus.
Astro~\cite{collins2020online}  maintains a log separately for each client to prevent double-spending. ABC~\cite{sliwinski2021asynchronous} enables validators to parallelize transaction processing without requiring consensus, increasing system efficiency. FastPay~\cite{baudet2020fastpay} takes advantage of payment semantics to minimize shared state between accounts, allowing asynchronous operations to run with greater concurrency. Flash~\cite{lewis2023flash} sidesteps the need for reliable broadcast by utilizing a partially-ordered, DAG-like structure. A non-sequential specification for money transfer objects is presented in~\cite{auvolat2020money}, which operates on a reliable broadcast abstraction. Pastro~\cite{kuznetsov2023permissionless} introduces a partially ordered transaction set that defines active participants and stake distribution, offering an adaptable approach for applications that do not require a global order. These approaches focus on payments but lack support for smart contracts.

\bheading{Other scaling approaches.} Similar to Multi-BFT consensus, DAG and sharding systems also utilize instance parallelism to enhance system scalability. 
Notable DAG protocols include DAG-Rider~\cite{keidar2021all}, Narwhal and Tusk~\cite{danezis2022narwhal}, Bullshark~\cite{spiegelman2022bullshark}, Shoal~\cite{spiegelman2024shoal}, BBCA~\cite{malkhi2023bbca} and Mysticeti~\cite{babel2025mysticeti} require each block to reference at least $2f + 1$ predecessor blocks. 
BFT sharding protocols~\cite{al2017chainspace, dang2019towards, huang2022brokerchain, hellings2021byshard, hong2024gridb} such as Elastico~\cite{Luu2016}, OmniLedger~\cite{omniledger}, and RapidChain~\cite{zamani2018rapidchain}, Sharper~\cite{amiri2021sharper}, Byshard~\cite{hellings2021byshard},  divide replicas into multiple subcommittees for parallel processing.
Except for the above approaches, some systems also employ concurrent execution of transactions to improve scalability~\cite{kotla2004high, burgos2021performance,narayanam2021partial,kang2024spotless}. 

\end{document}